\newtcolorbox{mybox}[2][]{colbacktitle=white,colback=white,coltitle=black,title={#2},fonttitle=\bfseries,#1, left = 2mm, right = 2mm, breakable}
\newtheorem{theorem}{Theorem}
\newtheorem{claim}{Claim}
\newtheorem{lemma}{Lemma}
\newtheorem{corollary}{Corollary}
\newtheorem{remark}{Remark}
\newtheorem{example}{Example}
\newtheorem{definition}{Definition}
\newtheorem{observation}{Observation}
\newcommand{\ka}{\mathbb{\kappa}}
\newcommand{\Fb}{\mathbb{F}}
\newcommand{\xcal}{\chi}
\newcommand{\hcal}{\mathcal{H}}
\newcommand{\rogers}[1]{{\color{black} #1}}
\newcommand{\prasad}[1]{{\color{red} #1}}
\title{Pliable Index Coding via ~\\Conflict-Free Colorings of Hypergraphs}
  \author{Prasad Krishnan, Rogers Mathew, Subrahmanyam Kalyanasundaram}
\begin{document}

\maketitle
\thispagestyle{empty}
\begin{abstract}
In the pliable index coding (PICOD) problem, a server is to serve multiple clients, each of which possesses a unique subset of the complete message set as side information and requests a new message which it does not have. The goal of the server is to do this using as few transmissions as possible. This work presents a hypergraph coloring approach to the scalar PICOD problem. A \textit{conflict-free coloring} of a hypergraph is known from literature as an assignment of colors to its vertices so that each hyperedge of the graph contains one uniquely colored vertex. For a given PICOD problem represented by a hypergraph consisting of messages as vertices and request-sets as hyperedges, we present achievable PICOD schemes using conflict-free colorings of the PICOD hypergraph. Various graph theoretic parameters arising out of such colorings (and some new coloring variants) then give a number of upper bounds on the optimal PICOD length, which we study in this work. 

Suppose  the PICOD hypergraph has $m$ vertices and $n$ hyperedges,
    where every hyperedge overlaps with at most $\Gamma$ other hyperedges.
We show easy to implement randomized algorithms for the following:
\begin{itemize}
    \item For the single request case, we give a PICOD of length  $O(\log^2\Gamma)$.
    This result improves over known achievability  results \cite{brahma_fragouli_PICOD,PolyTime_PICOD} for some parameter ranges.  
    \item For the $t$-request case,  we give an MDS code of length $\max(O(\log \Gamma \log m), O(t \log m))$. Further if the hyperedges (request sets) are sufficiently large, we give a PICOD 
    of the same length as above, which is not
    based on MDS construction. In general, this gives an improvement over the results of \cite{PolyTime_PICOD}. Our codes are of near-optimal length (up to a multiplicative 
    factor of $\log t$).
\end{itemize}
\end{abstract}
\let\thefootnote\relax\footnotetext{
 Dr.\ Krishnan is with the Signal Processing \& Communications Research Center, International Institute of Information Technology Hyderabad, India, email:prasad.krishnan@iiit.ac.in. Dr.\ Mathew and Dr.\ Kalyanasundaram are with the Department of Computer Science and Engineering, Indian Institute of Technology Hyderabad, email: \{rogers,subruk\}@iith.ac.in.~\\ 
 Acknowledgements: The first author acknowledges support from SERB-DST projects MTR/2017/000827 and CRG/2019/005572. The second author was supported by a grant from the Science and Engineering Research Board, Department of Science and Technology, Govt. of India (project number: MTR/2019/000550).  
 The third author acknowledges SERB-DST (MTR/2020/000497) for supporting this research.   
}
\section{introduction}
The Index Coding problem introduced by Birk and Kol in \cite{BiK} consists of a system with a server containing $m$ messages and $n$ receivers connected by a broadcast channel. Each receiver has a subset of the messages at the server as \textit{side-information} and demands a particular new message. The goal of the index coding problem is to design a transmission scheme at the server which uses minimum number of transmissions to serve all receivers, also called the length of the index code. The index coding problem is a canonical problem in information theory and has been addressed by a variety of techniques, including graph theory \cite{BBJK,LocalColoringSDL}, linear programming \cite{BKL2}, interference alignment \cite{MCJ}, etc. 

A variant of the index coding problem, called pliable index coding (PICOD), was introduced by Brahma and Fragouli in \cite{brahma_fragouli_PICOD}. The pliable index coding problem relaxes the index coding setup, such that each receiver requests \textit{any} message which is not present in its side-information (i.e., any message from its \textit{request-set}). It was shown in \cite{brahma_fragouli_PICOD} that finding the optimal length of a PICOD problem is NP-hard in general. However the existence of a code with length $O(\min\{\log m(1+\log^+(\frac{n}{\log m})), m, n\})$ was proved using a probabilistic argument (where $\log^+(x)=\max\{0,\log(x)\})$. When $m=n^{\delta}$ for some constant $\delta>0$, this means that $O(\log^2 n)$ is sufficient. Some algorithms for designing pliable index codes based on greedy and set-cover techniques were also presented and compared in \cite{brahma_fragouli_PICOD}. In \cite{PolyTime_PICOD}, a polynomial-time algorithm was presented for general PICOD problems which achieves a length $O(\log^2 n)$. Thus, unlike the index coding problem which has instances for which the required length can be $\Theta(n)$ (for instance, the directed $n$-cycle problem \cite{BBJK}), much fewer transmissions are sufficient in general for PICOD instances. For several special classes of PICOD problems, distinguished by the structure of the side-information or request-sets of the receivers, achievability and converse results were presented in \cite{TightITConverse_LiuTun_TIT,Sasi_Rajan_CodeConstrPICOD}. Converse techniques were further developed in \cite{Ong_Vellambi_Kliewer_AbsentRx1,ong2019improved_absRx2}, using which the optimal lengths of specific classes of PICOD problems were obtained. Other extensions of  PICOD such as vector pliable index codes \cite{PolyTime_PICOD}, multiple requests \cite{brahma_fragouli_PICOD,PolyTime_PICOD}, secure pliable index codes \cite{SecureDecPICOD} and 
decentralized pliable index codes \cite{Decen_PICOD} have  been studied recently. Pliable index coding has also been proposed for efficient data exchange in real-world applications, such as in the data shuffling phase of distributed computing \cite{PICODdatashuffling}.

In this work, we present a graph coloring approach to pliable index coding.  A \textit{conflict-free coloring} of a hypergraph is an assignment of labels to its vertices so that each hyperedge of the hypergraph contains at least one vertex which has a label distinct from others. Conflict-free colorings were introduced by Even et al. in \cite{even2003conflict}, motivated by a problem of frequency assignment in wireless communications. Since then, it has been extensively studied in the context of general hypergraphs, hypergraphs induced by neighborhoods in graphs, hypergraphs induced by simple paths in a graph, hypergraphs that naturally arise in geometry, etc. See \cite{smorodinsky2013conflict} for a survey on conflict-free colorings.

Any PICOD problem can equivalently be represented using a hypergraph 
with the vertices representing the messages, and the request-sets as hyperedges. We show that conflict-free colorings (and its variants) of this hypergraph 
give rise to achievability schemes for PICOD. 

 Our specific contributions and organization of this paper are as follows. 

\begin{itemize} 
    \item We  briefly review the PICOD problem setup in Section \ref{sec:PICODreview} and conflict-free colorings in Section \ref{sec:conflictfreecolorings}.  In Section \ref{subsec:relationshipPICCF}, we show that the optimal length of a PICOD problem
    is at most the conflict-free chromatic number of the hypergraph corresponding to the PICOD
    problem (Lemma \ref{lemma:upperboundchromnumber}). The conflict-free chromatic number of a hypergraph is the smallest number of colors
    required to conflict-free color the hypergraph. 

    \item In Section \ref{subsec:CFcoveringsandPICOD}, we define the notion of \textit{conflict-free collection of colorings} of hypergraphs. We call the corresponding chromatic number the \emph{conflict-free covering number}. This notion gives a better upper bound 
    than what is given by conflict-free coloring (Lemma \ref{lemma:upperboundCFcover}). Using conflict-free collections, we show in Section \ref{subsec:logsqrgammaresult} that $O(\log^2 \Gamma)$ transmissions suffice for scalar PICOD schemes, where $\Gamma$ refers to the maximum number of hyperedges any single hyperedge overlaps with (Theorem \ref{thm_CF_cover}). 
    This result improves over known achievability  results \cite{brahma_fragouli_PICOD,PolyTime_PICOD} for some parameter ranges.  
    Our proof for Theorem \ref{thm_CF_cover} uses a  probabilistic argument, but this can be converted into a 
    deterministic polynomial time algorithm using known techniques (see the discussion towards the end of Section \ref{subsec:logsqrgammaresult}). 
    \item In Section \ref{sec:local}, we define a parameter called \textit{local} conflict-free chromatic number and show a PICOD scheme whose length is upper bounded by this parameter (Theorem \ref{theoremMDS}). As 
    the local conflict-free chromatic number can be smaller than the conflict-free chromatic number, this improves the upper bound in Section \ref{subsec:relationshipPICCF}. Using conflict-free collection of colorings, in Section \ref{subsec:localcovering_PICOD}, we also generalize the covering number of Section \ref{subsec:CFcoveringsandPICOD} to its local variant and show the corresponding scheme (Theorem \ref{thm:localcoverbound}). 

    \item In Section \ref{sec:trequest}, we study the $t$-request PICOD problem where each receiver wants
    $t$ messages from its request set. This corresponds to a conflict-free coloring of 
    the hypergraph, where each hyperedge sees $t$ colors exactly once. The corresponding chromatic numer is called 
    the \emph{$t$-strong conflict-free chromatic number}. Analogous to conflict-free covering
    number, local conflict-free chromatic number and local conflict-free covering number, we define
    ``$t$-strong'' variants of each of these parameters. We observe that each of these parameters
    upper bound the length of an optimal $t$-request PICOD.
    
    \item Let $\hcal$ be the $t$-request PICOD hypergraph with $m$ vertices and $n$ hyperedges,
    where every hyperedge overlaps with at most $\Gamma$ other hyperedges. In Section \ref{sec:lambdaupperboundtstrong},
    we give a simple randomized algorithm to construct an MDS code of length $\max(O(\log \Gamma \log m), O(t \log m))$. In Section \ref{sec:alphaupperboundtstrong}, we give another randomized
    approach to construct a code of length  
    
    $\max(O(\log \Gamma \log m), O(t \log m))$, provided 
    every request set (or hyperedge) is sufficiently large.
    In general, this gives an improvement over the results of \cite{PolyTime_PICOD} where a 
    PICOD of length $O(t\log n+\log^2 n)$ was shown. 
    In Section \ref{sec:tstrongtight}, we show that the lengths of the codes yielded by the above constructions are asymptotically tight up to a multiplicative factor of $\log t$.

     \item In Section \ref{sec:k-folddefinition}, we define a generalization 
     of conflict-free coloring called \textit{$k$-fold conflict-free coloring}. This corresponds to the $k$-vector pliable index codes. 
     
\end{itemize}

\textit{Notations:} Let $[n]\triangleq \{1,\hdots,n\}$ for a positive integer $n$. For sets $A,B$ we denote by $A\backslash B$ the set of elements in $A$ but not in $B$. We abuse notation to denote $A\backslash \{b\}$ as $A\backslash b$. The set of $k$-sized subsets of any set $A$ is denoted by $\binom{A}{k}.$ The span of a set of vectors $U$ is denoted by $span(U)$. The dimension of a subspace $W$ is denoted by $\dim(W)$. Unless mentioned explicitly, all logarithms in the paper are to the base $e$. The empty is denoted by $\emptyset$. A hypergraph $\hcal$ is a pair of sets $(V,{\cal E})$ where the set $V$ is called the set of vertices of $\hcal$ (also denoted by $V(\hcal)$) and $\cal E$ is a collection of subsets of $V$ called the set of hyperedges (sometimes referred to as simply edges) of $\hcal$ (also denoted by ${\cal E}(\hcal)$). Given a collection of hypergraphs $\hcal_p:p\in[P]$, we define their union as $\hcal=\cup_{p\in[P]}\hcal_p$ where $V(\hcal)=\cup_{p\in[P]}V(\hcal_p)$, and ${\cal E}(\hcal)=\cup_{p\in[P]}{\cal E}(\hcal_p)$.
\section{Pliable Index Coding Problem}
\label{sec:PICODreview}
We briefly review the pliable index coding problem, introduced in \cite{brahma_fragouli_PICOD}. Consider a communication setup defined as follows. There are $m$ messages denoted by $\{x_i:i\in[m]\}$ where $x_i$ lies in some finite alphabet ${\cal A}$. These $m$ messages are available at a server. Consider $n$ receivers indexed by $[n]$. Assume that there is a noise-free broadcast channel between the server and the receivers. Each receiver $r$ has some subset of messages available apriori, as \textit{side-information}. The set of indices
of the symbols available at receiver $r$ is denoted $S_r$, and those that are not available is denoted 
$I_r= [m] \setminus S_r$. 
We call $\{x_i:i\in I_r\}$ as the \textit{request-set} of receiver $r$. 
The demand at receiver $r$ is fulfilled if it receives any symbol in its request set from the server.
The messages indexed by $[m]$, the receivers indexed by $[n]$, and the request-sets ${\mathfrak I}\triangleq \{I_r:r\in[n]\}$ together define a $(n,m,{\mathfrak I})$-\textit{pliable index coding problem (PICOD problem).}  We assume that $|I_r|\geq 1, \forall r$, as any receiver with $|I_r|=0$ can be removed from the problem description as it has all the symbols. Consider a hypergraph $\cal{H}$ with vertex set $V=[m]$ and edge set $\mathfrak{I}=\{I_r:r\in[n]\}.$ This hypergraph  captures the PICOD problem. 



A \textit{pliable index code} (PIC) consists of a collection of (a) an encoding function at the server which encodes the $m$ messages to an $\ell$-length codeword, denoted by $\phi:{\cal A}^m\rightarrow {\cal A}^\ell$ and (b) decoding functions $\{\psi_r:r\in [n]\}$ where $\psi_r:{\cal A}^\ell\times {\cal A}^{|S_r|}\rightarrow {\cal A}$ denotes the decoding function at receiver $r$ such that \[
\psi_r\left(\phi(\{x_i:i\in[m]\}),\{x_i:i\in S_r\}\right)=x_d,~\text{for some}~d\in I_r.
\]
The quantity $\ell$ is called the \textit{length} of the PIC.  It is of interest to design pliable index codes of small length.

In this work, we assume ${\cal A}={\mathbb F}^k$ for some finite field $\mathbb F$ and integer $k\geq 1$. We refer to these codes as \textit{$k$-vector PICs}, while the $k=1$ case is also called \textit{scalar PIC}. We focus on linear PICs, i.e., one in which the encoding and decoding functions are linear. In that case, the encoder $\phi$ is represented by a $\ell\times mk$ matrix (denoted by $G$)  such that 
$\phi(\{x_i:i\in[m]\})=G\boldsymbol{x}^T,$ where $\boldsymbol{x}=(x_{1,1},\hdots,x_{1,k},\hdots,x_{m,1},\hdots,x_{m,k}).$ 
For the PICOD problem given by hypergraph $\hcal$, the smallest $\ell$ for which there is a linear $k$-vector PIC
(over some field $\mathbb F$) is denoted by $\ell_k^*({\hcal})$.

The following definition and lemma (which is proved in \cite{PolyTime_PICOD}) describe when $G$ can lead to correct decoding at the receivers.
\begin{definition}
\label{DefnGeneratorForEdge}
For an $(n,m,{\mathfrak I})$-PICOD problem, a matrix $G$ with $mk$ columns indexed as $G_{i,j}:i\in[m],j\in[k],$ is said to \textit{satisfy} receiver $r\in[n],$ if the following property (P) is satisfied by $G$.
\begin{itemize}
    \item[\textbf{(P)}] There exists some $d\in I_r$ such that $\dim(span(\{G_{d,j}:j\in[k]\}))=k$ and 
    \begin{align*}
span(\{G_{d,j}&:j\in[k]\}) \bigcap span \left(\{G_{i,j}:~\forall i\in I_r\backslash d, ~ j\in[k]\}\right)=\{\boldsymbol{0}\}.
    \end{align*}
\end{itemize}
\end{definition}
\begin{lemma}[\cite{PolyTime_PICOD} Lemmas 1 and 6]
\label{lemmaindependence}
A matrix $G$ with $mk$ columns is the encoder of a PIC for an $(n,m,{\mathfrak I})$-PICOD problem if and only if the property (P) of Definition \ref{DefnGeneratorForEdge} is true for each receiver $r\in[n]$. 
\end{lemma}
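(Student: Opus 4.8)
The plan is to rewrite, receiver by receiver, the requirement ``receiver $r$ can decode'' as a linear-algebraic condition on the columns of $G$, and then check that it is exactly property (P) for $r$. Write the codeword as $\boldsymbol{c}=G\boldsymbol{x}^{T}=\sum_{i\in[m]}\sum_{j\in[k]}x_{i,j}\,G_{i,j}\in\mathbb{F}^{\ell}$, where $G_{i,j}\in\mathbb{F}^{\ell}$ is the $(i,j)$-th column of $G$. Fix $r$. Since receiver $r$ also has $\{x_{i}:i\in S_{r}\}$, it can pass freely between $\boldsymbol{c}$ and $\boldsymbol{c}_{r}:=\sum_{i\in I_{r}}\sum_{j\in[k]}x_{i,j}\,G_{i,j}$; and because both $\boldsymbol{c}_{r}$ and $x_{d}$ (for $d\in I_{r}$) depend only on $\{x_{i}:i\in I_{r}\}$, decodability of $x_{d}$ at receiver $r$ reduces to: $x_{d}$ is a well-defined function of $\boldsymbol{c}_{r}$ as $\{x_{i}:i\in I_{r}\}$ ranges freely. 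Since $\{x_{i}:i\in I_{r}\}\mapsto\boldsymbol{c}_{r}$ is linear with fibers equal to cosets of its kernel, a standard argument makes this equivalent to condition $(\star_{d})$: every $\boldsymbol{z}\in\mathbb{F}^{mk}$ supported on $\{(i,j):i\in I_{r},\ j\in[k]\}$ with $G\boldsymbol{z}^{T}=\boldsymbol{0}$ has $z_{d,j}=0$ for all $j\in[k]$.

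The core step is then to check, for a fixed $d\in I_{r}$, that $(\star_{d})$ is equivalent to property (P) with witness $d$, i.e.\ to ``$\dim span\{G_{d,j}:j\in[k]\}=k$ and $span\{G_{d,j}:j\in[k]\}\cap span\{G_{i,j}:i\in I_{r}\backslash d,\ j\in[k]\}=\{\boldsymbol{0}\}$''. ($\Rightarrow$) A nontrivial relation $\sum_{j}a_{j}G_{d,j}=\boldsymbol{0}$ produces a kernel vector supported on message $d$'s columns alone, which violates $(\star_{d})$, forcing the dimension clause; and a nonzero $\boldsymbol{w}=\sum_{j}a_{j}G_{d,j}=\sum_{i\in I_{r}\backslash d,\,j}b_{i,j}G_{i,j}$ in the intersection yields the kernel vector $\boldsymbol{z}$ with $z_{d,j}=a_{j}$, $z_{i,j}=-b_{i,j}$ for $i\in I_{r}\backslash d$, and $z_{i,j}=0$ otherwise; it is supported on the $I_{r}$-columns and has $(a_{j})\neq\boldsymbol{0}$, since $\boldsymbol{w}\neq\boldsymbol{0}$ and, by the clause just established, the $G_{d,j}$ are independent, again violating $(\star_{d})$. ($\Leftarrow$) If $\boldsymbol{z}$ is a kernel vector supported on the $I_{r}$-columns, then $\sum_{j}z_{d,j}G_{d,j}=-\sum_{i\in I_{r}\backslash d,\,j}z_{i,j}G_{i,j}$ lies in the (trivial) intersection, hence vanishes, and the dimension clause gives $z_{d,j}=0$ for all $j$. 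For the implication ``(P) holds for all $r$ $\Rightarrow$ $G$ is a valid encoder'' one can moreover write the linear decoder explicitly: from $\boldsymbol{c}_{r}\in span\{G_{i,j}:i\in I_{r}\}=span\{G_{d,j}:j\in[k]\}\oplus span\{G_{i,j}:i\in I_{r}\backslash d\}$, project onto the first summand to obtain $\sum_{j}x_{d,j}G_{d,j}$, and invert $(x_{d,1},\dots,x_{d,k})\mapsto\sum_{j}x_{d,j}G_{d,j}$, injective by the dimension clause.

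Putting these together, $G$ is the encoder of a PIC if and only if each receiver $r$ can decode some $x_{d}$, $d\in I_{r}$, if and only if for each $r$ there is $d\in I_{r}$ with $(\star_{d})$, if and only if property (P) holds for each $r$. The step most needing care is the ``valid encoder $\Rightarrow$ (P)'' direction, where one must extract $(\star_{d})$ from an \emph{arbitrary} correct decoder $\psi_{r}$: two message tuples that agree on $S_{r}$ and yield the same codeword feed $\psi_{r}$ identical arguments, hence identical outputs, so their message-$d$ components coincide; this is exactly $(\star_{d})$ if the demanded index $d$ may be taken fixed per receiver, and even if $\psi_{r}$ is allowed to decode different indices for different message realizations, evaluating at $\boldsymbol{c}_{r}=\boldsymbol{0}$ with all-zero side information still pins down some fixed $d\in I_{r}$ with $(\star_{d})$. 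The remainder — translating freely between ``kernel vectors supported on the $I_{r}$-columns'' and ``intersections of column spans'', and observing that the dimension clause of (P) is genuinely needed, since it is exactly what excludes kernel vectors supported on message $d$ alone — is routine bookkeeping.
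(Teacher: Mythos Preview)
The paper does not supply its own proof of this lemma; it is quoted verbatim from \cite{PolyTime_PICOD} (Lemmas~1 and~6 there) and used as a black box throughout. Your argument is correct and is essentially the standard linear-algebraic justification behind the cited result: subtract the side information to pass from $\boldsymbol{c}$ to $\boldsymbol{c}_r$, reduce decodability of $x_d$ to the kernel condition $(\star_d)$ on vectors supported on the $I_r$-columns, and verify that $(\star_d)$ is equivalent to the two clauses of property~(P) with witness $d$. The explicit linear decoder you describe (project $\boldsymbol{c}_r$ onto the $d$-summand of the direct sum, then invert) is exactly the construction one finds in \cite{PolyTime_PICOD}.

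One minor remark: in the paper's definition of a PIC the demanded index $d$ is fixed per receiver (the ``for some $d\in I_r$'' sits outside the implicit universal quantifier over message tuples), so your extra paragraph handling a decoder that may return different indices for different realizations is not actually needed here. That is also the one place where your sketch is not fully tight: at the all-zero input every $x_d$ equals $0$, so ``evaluating at $\boldsymbol{c}_r=\boldsymbol{0}$ with all-zero side information'' does not by itself single out a particular $d$ for which $(\star_d)$ holds. For the fixed-$d$ model the paper uses, however, your proof is complete.
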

\rogers{Lemma \ref{lemma:concatenatedGeneratorsPIC} below} is useful to prove achievability results for PICOD problems in this work.  
\begin{lemma}
\label{lemma:concatenatedGeneratorsPIC}
For an $(n,m,{\mathfrak I})$-PICOD problem, let $\{G^p:p\in[P]\}$ denote a collection of matrices, where $G^p$ is of size $L_p\times mk,$ such that for each $r\in[n],$ there exists some matrix $G^p$ which satisfies receiver $r$. Then the matrix $G=\begin{bmatrix}
G^1\\\vdots\\G^P\end{bmatrix}$ of size $(\sum_{p\in[P]}L_p)\times mk$ is the encoder of a PIC for the given PICOD problem. 
\end{lemma}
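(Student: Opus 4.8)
The plan is to reduce everything to the characterization in Lemma \ref{lemmaindependence}: it suffices to check that the stacked matrix $G$ satisfies property (P) of Definition \ref{DefnGeneratorForEdge} for every receiver $r\in[n]$. So I would fix $r$, pick (by hypothesis) an index $p$ for which $G^p$ satisfies $r$, and let $d\in I_r$ be the index witnessing property (P) for the pair $(G^p,r)$. The claim is that the same $d$ works for $G$.

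The one structural fact to record first is the column structure of $G$: since $G$ is the vertical stack of the $G^p$, for every $i\in[m]$ and $j\in[k]$ the column $G_{i,j}$ is the concatenation of the columns $G^1_{i,j},\ldots,G^P_{i,j}$, so $G^p_{i,j}$ is exactly the restriction of $G_{i,j}$ to the block of $L_p$ coordinates occupied by $G^p$. The only consequence I need is that any $\mathbb{F}$-linear relation among columns of $G$, restricted to that block, becomes the same relation among the corresponding columns of $G^p$ (and, conversely, linear independence can only be preserved or improved when coordinates are appended).

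With this in hand, verifying (P) for $(G,r)$ with witness $d$ splits into two parts. For the rank condition, the $k$ vectors $\{G_{d,j}:j\in[k]\}$ restrict to $\{G^p_{d,j}:j\in[k]\}$, which already spans a $k$-dimensional space by property (P) for $G^p$; since there are only $k$ such vectors, $\dim(span(\{G_{d,j}:j\in[k]\}))=k$. For the intersection condition, I would take an arbitrary vector in $span(\{G_{d,j}:j\in[k]\})\cap span(\{G_{i,j}:i\in I_r\setminus d,\ j\in[k]\})$, write it in the two ways guaranteed by membership in each span, restrict the resulting identity to the $G^p$-block, and invoke property (P) for $(G^p,r)$ together with the linear independence of $\{G^p_{d,j}:j\in[k]\}$ to force the coefficients on the $G_{d,j}$ side to vanish, so the vector is $\boldsymbol{0}$. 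Since $r$ was arbitrary, Lemma \ref{lemmaindependence} then yields that $G$ is a valid PIC encoder.

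I do not expect a genuine obstacle here; the only point requiring care is the bookkeeping that a linear dependence among the columns of the large matrix descends to the corresponding dependence inside each block $G^p$, which is immediate from the block-concatenated structure of $G$. It is also worth noting the operational reading of the statement, which is really the reason it holds: a receiver satisfied by $G^p$ can simply discard the symbols transmitted via the other blocks $G^{p'}$, $p'\neq p$, and decode exactly as it would have from $G^p$ alone, so stacking valid ``partial'' encoders produces a valid encoder.
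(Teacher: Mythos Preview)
Your proposal is correct and follows the same approach as the paper: fix $r$, take the $p$ and witness $d$ for which $G^p$ satisfies property (P), and observe via straightforward linear algebra (the block-restriction argument you spell out) that the same $d$ witnesses property (P) for the stacked matrix $G$, then apply Lemma~\ref{lemmaindependence}. The paper's own proof compresses this into a single line (``by simple linear algebraic arguments''), so your version is in fact a more detailed rendering of the same argument.
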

\begin{IEEEproof}
For each $r\in[n]$, there exists some matrix $G^p$ such that Property (P) holds for $r$ (with respect to some $d\in I_r$). By simple linear algebra, we see that the matrix $G$ too must satisfy property (P) for receiver $r$ (with respect to $d\in I_r$), and hence satisfies $r$. Applying Lemma \ref{lemmaindependence}, the proof is complete. 
\end{IEEEproof}

\section{Scalar PICs arising from Conflict-free Colorings}
\label{sec:conflictfreecolorings}
Firstly, we review the \rogers{definition of} conflict-free colorings of a hypergraph and discuss some existing results. 
In Subsection \ref{subsec:relationshipPICCF}, we show that a conflict-free coloring of the hypergraph ${\hcal}(V=[m],\mathfrak{I})$ gives a scalar linear PIC scheme for the PICOD problem given by ${\hcal}$, with length equal to the number of colors. We then define in Subsection \ref{subsec:CFcoveringsandPICOD} the notion of a conflict-free collection of colorings of $\hcal$, and show that such a collection also leads to achievable PIC schemes. This yields tighter upper bounds on $\ell_1^*({\hcal})$, in general. In Subsection \ref{subsec:logsqrgammaresult}, we show that the PICOD problem $\hcal$ has a PIC scheme with length $O(\log^2\Gamma)$, where $\Gamma$ is an edge-overlap parameter associated with $\hcal$. This upper bound gives an order-wise improvement over the $O(\log^2n)$ bound shown in \cite{PolyTime_PICOD}. 

Let $\mathcal{H}=(V,\mathcal{E})$ be a hypergraph. Let $C:V \rightarrow [L]$ be a coloring of $V$, where $L$ is a positive integer. 
Consider a hyperedge $E \in \mathcal{E}$. We say $C$ is a \emph{conflict-free coloring for the hyperedge $E$} if there is a vertex $v \in E$ such that $C(v) \neq C(u),~\forall u \in E \setminus \{v\}$. That is, in such a coloring, $E$ contains a vertex whose color is distinct from that of every other vertex in $E$. We say $C$ is a \emph{conflict-free coloring of the hypergraph $\mathcal{H}$} if $C$ is a conflict-free coloring for every $E\in {\cal E}$.  
The \emph{conflict-free chromatic number} of $\mathcal{H}$, denoted by $\xcal_{CF}(\mathcal{H})$,  is the minimum $L$ such that there is a  conflict-free coloring $C:V \rightarrow [L]$ of $\mathcal{H}$. 
The following theorem on conflict-free coloring on hypergraphs is due to  Pach and Tardos \cite{Pach2009}, which we shall use to obtain one of our main results (Theorem \ref{thm_CF_cover} and Corollary \ref{corollary_achievability}) in Subsection \ref{subsec:logsqrgammaresult}.
\begin{theorem}[Theorem 1.2 in \cite{Pach2009}]\label{thm_pach_main}
For any positive integers $t$ and $\Gamma$, the conflict-free chromatic number of any hypergraph in which each edge is of size at least $2t-1$ and each edge intersects at most $\Gamma$ others is $O(t\Gamma^{1/t}\log \Gamma)$. There is a randomized polynomial time algorithm to find such a coloring.   
\end{theorem}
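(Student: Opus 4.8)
The plan is to obtain the bound from a randomised hierarchical colouring whose analysis is localised by the Lov\'asz Local Lemma (LLL); the LLL is precisely what lets the number of colours be governed by the local quantity $\Gamma$ rather than by the number of vertices. I would fix $q=\Theta(\Gamma^{1/t}\log\Gamma)$ and use a palette of $K=O(tq)=O(t\,\Gamma^{1/t}\log\Gamma)$ colours, split into $t$ disjoint batches of size $q$, and build the colouring in $t$ rounds in the spirit of the iterated-sampling procedure of Even et al.: in round $i$ each vertex still uncoloured after rounds $1,\dots,i-1$ is retained independently with a carefully chosen probability $p_i$, and every retained vertex gets a uniformly random colour from the $i$-th batch (after which it is coloured and leaves the pool). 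Call a hyperedge $E$ \emph{satisfied at round $i$} if, among the vertices of $E$ that are still uncoloured at the start of round $i$, at least one receives a colour that none of the others receives in round $i$; since round $i$ draws only from the $i$-th batch, such a vertex is then uniquely coloured inside $E$ for the remainder of the process, so a colouring in which every hyperedge is satisfied at some round is conflict-free.

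Next I would bound, for a fixed hyperedge $E$, the probability of the bad event $\mathcal{B}_E$ that $E$ is satisfied at no round. The decisive structural point is locality: $\mathcal{B}_E$ depends only on the retain/discard and colour choices of the vertices of $E$, so it is mutually independent of $\{\mathcal{B}_{E'}:E'\cap E=\emptyset\}$, and by hypothesis at most $\Gamma$ hyperedges meet $E$. The symmetric LLL then produces a conflict-free colouring as soon as $e(\Gamma+1)\Pr[\mathcal{B}_E]<1$, so it suffices to show $\Pr[\mathcal{B}_E]=O(1/\Gamma)$. This I would do round by round: a Chernoff bound shows that, with the chosen $p_i$, the number of still-active vertices of $E$ follows its intended trajectory except with small failure probability per round (which absorbs one $\log\Gamma$ factor of $q$), and, conditioned on this, one shows that whenever the active set of $E$ is in a favourable size range a fresh batch of $q$ colours isolates one of its vertices with probability bounded below by a constant; putting the $t$ rounds together, the chance that $E$ misses all of its $t$ opportunities is $q^{-\Omega(t)}$ up to lower-order factors, which is $O(1/\Gamma)$ by the choice of $q$. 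The hypothesis that every edge has size at least $2t-1$ is used exactly here: it guarantees that the active part of $E$ is nonempty and of usable size through enough of the $t$ rounds, so that $t$ rounds really are enough room.

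For the constructive claim I would invoke the Moser--Tardos algorithmic Local Lemma: $\{\mathcal{B}_E:E\in\mathcal{E}\}$ is a symmetric instance with polynomially many events, each determined by a bounded block of the underlying independent random variables and each testable in polynomial time, so the resampling algorithm returns a satisfying assignment of those variables---equivalently, a conflict-free colouring with $O(t\,\Gamma^{1/t}\log\Gamma)$ colours---in expected polynomial time, which is the stated randomised polynomial-time algorithm.

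I expect the hard part to be the estimate $\Pr[\mathcal{B}_E]=O(1/\Gamma)$ using only $q=\Theta(\Gamma^{1/t}\log\Gamma)$ colours per batch. This forces a delicate joint choice of the retention probabilities $p_1,\dots,p_t$: the active size of a hyperedge must decay fast enough that every edge---from the smallest allowed size $2t-1$ up---passes through the favourable range within the $t$ rounds, yet slowly enough that in each round the per-round isolation probability retains its advantage and that the $t$ separate concentration failures still sum to $o(1/\Gamma)$. Threading this needle is where both the lower bound $2t-1$ on the edge size and the extra $\log\Gamma$ slack in the palette are consumed, and it is the technical core of the Pach--Tardos argument.
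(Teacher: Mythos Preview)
The paper does not prove this theorem. It is quoted verbatim as Theorem~1.2 of Pach and Tardos~\cite{Pach2009} and used as a black box (together with Observation~\ref{obv_relation_btwn_CF_and_CFCover}) in the proof of Theorem~\ref{thm_CF_cover} to handle the sub-hypergraph~$\mathcal{G}$ of large edges. There is therefore no ``paper's own proof'' to compare your proposal against.

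For what it is worth, your sketch is in the spirit of the actual Pach--Tardos argument: a multi-round randomised colouring analysed locally via the symmetric Lov\'asz Local Lemma, with the bounded-intersection hypothesis supplying the dependency degree. Your invocation of Moser--Tardos for the algorithmic claim is a natural modern substitute for the earlier constructive LLL machinery. The only part of the present paper that resembles your outline is the proof of Claim~\ref{lem_Hi} in Appendix~\ref{appendix_proof_claim}, which carries out a much simpler single-level version of this idea (independent two-colourings over $t_i$ rounds, bad events bounded and decoupled by the Local Lemma) for the complementary sub-hypergraphs~$\mathcal{H}_i$ of small edges.
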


\subsection{Relationship of PIC to Conflict-free Coloring}
\label{subsec:relationshipPICCF}
In this subsection, we show that a conflict-free coloring of the hypergraph ${\hcal}(V=[m],\mathfrak{I})$ gives a scalar PIC scheme for the PICOD problem given by ${\hcal}$. To do this, we define the following matrix associated with a conflict-free coloring of ${\hcal}.$
\begin{definition}[Indicator Matrix associated with a coloring]
\label{DefngenmatrixAssocWithaColoring}
Let $C:V\rightarrow [L]$ denote a coloring of ${\hcal}(V = [m],{\mathfrak{I}})$, where $C(i)$ denotes the color assigned to the vertex $i\in[m]$. Consider a standard basis of the $L$-dimensional vector space over ${\mathbb F}$, denoted by $\{e_1,\hdots,e_L\}$. Now consider the $L\times m$ matrix $G$ (with columns indexed as $\{G_{i}:i\in[m]\}$) constructed as follows. 
\begin{itemize}
    \item For each $i\in[m]$, column $G_{i}$ of $G$ is fixed to be $e_{C(i)}$.
\end{itemize}
We call $G$ as the \textit{indicator matrix associated with the coloring $C$. }
\end{definition}
Using the indicator matrix associated with a conflict-free coloring of ${\hcal},$ we shall prove our first bound on $\ell^*_1({\hcal}).$
\begin{lemma}
\label{lemma:upperboundchromnumber}
$\ell^*_1({\hcal})\leq \xcal_{CF}({\hcal})$.
\end{lemma}
\begin{IEEEproof}
Let $C:V\rightarrow [L]$ denote a conflict-free coloring of ${\hcal}$. We first show that there exists an $L$-length scalar linear PIC for the  problem defined by ${\hcal}$.
Let $G$ denote the indicator matrix associated with the coloring $C$ as defined in Definition \ref{DefngenmatrixAssocWithaColoring}. We show that $G$ satisfies Lemma \ref{lemmaindependence} and hence is a valid encoder for a linear PIC. 

In any conflict-free coloring of ${\hcal}$, every edge $I_r$ of ${\hcal}$ has a vertex $d$ such that $C(d)\neq C(i), \forall i\in I_r\setminus d$.  Then, clearly, $e_{C(d)}\neq e_{C_{(i)}},$ for any $i\in I_r\backslash d.$ This also means $span(\{e_{C_{(d)}}\})\cap span(\{e_{C_{(i)}}: i\in I_r\backslash d\})=\{\boldsymbol{0}\}$, as the vectors $\{e_1,\hdots,e_L\}$ are basis vectors. Further, $e_{C_{(d)}}$ spans a one dimensional space. Thus, $G$ satisfies every receiver $r$ and is a valid encoder by Lemma \ref{lemmaindependence}. Note that the length of the code is exactly $L$. By definition of $\xcal_{CF}({\hcal}),$ the proof is complete. 
\end{IEEEproof}
\begin{example}
Consider the PICOD problem represented by the hypergraph ${\hcal}$ with vertex set $V=\{1,\hdots,8\}$ and edge set \begin{align*}{\cal E}=\{
\{1,2,4,6\},\{1,2,3,5\},\{2,3,4,7\},\{1,3,4,8\},\\~~~\{2,5,6,7\},\{1,5,6,8\},\{3,5,7,8\},\{4,6,7,8\}
\}. 
\end{align*}
Consider a coloring $C$ which assigns color $1$ to vertices $\{1,2,3,4\}$ and color $2$ to vertices $\{5,6,7,8\}.$ Note that this is a valid conflict-free coloring of ${\hcal}$. The indicator matrix associated with this coloring is given by
\begin{align*}
    G=\begin{bmatrix}
    1&1&1&1&0&0&0&0\\
    0&0&0&0&1&1&1&1
    \end{bmatrix}.
\end{align*}
It can be checked that the above matrix satisfies the condition in Lemma \ref{lemmaindependence} for the PICOD problem defined by $\hcal.$
\end{example}
\subsection{Conflict-free coverings and PICOD} \label{subsec:CFcoveringsandPICOD} 
In the following discussion, we define a new parameter called the conflict-free covering number, which in general improves upon the upper bound on the optimal length as given in Lemma \ref{lemma:upperboundchromnumber}. 
\begin{definition}[\textit{Conflict-free collection, conflict-free covering number}]
\label{defn:cfcoveringnumber}
Let $\mathcal{H}=(V,\mathcal{E})$ be a hypergraph. Let $\mathfrak{C} = \{C^1, \ldots , C^P\}$ where each $C^p:V\rightarrow [L_p]$ be colorings of the hypergraph $\hcal$. We say $\mathfrak{C}$ is a \emph{conflict-free collection of colorings of $\mathcal{H}$}, if the following condition holds: For every $E \in \mathcal E$, there is $p \in [P]$ such that E sees
some color exactly once under the coloring $C_p$.

The quantity $$\alpha_{CF}(\mathcal{H})\triangleq \min_{\mathfrak{C}}\sum_{p=1}^P L_p,$$ representing the minimum sum $\sum_{p=1}^PL_p$ over all possible collections $\mathfrak{C}$ (over all $P$) as defined above, is called the \emph{conflict-free covering number of $\mathcal{H}$}.

\end{definition}
In the following, we show that the parameter $\alpha_{CF}(\hcal)$ is sandwiched between functions of $\xcal_{CF}(\hcal)$.
\begin{lemma}
\label{lemma:generalhypergraph_alphabounds}
Let $\hcal$ be a hypergraph with $\xcal_{CF}(\hcal)=\xcal$ and let $r$ be the smallest integer such that in any conflict-free coloring of $\hcal$ using $\xcal$ colors, the vertices in any hyperedge are colored with at most $r$ colors. If $r=1,$ then $\xcal=\alpha_{CF}(\hcal)=1$ and if $r\geq 2$, then $\log_2(\xcal) \leq \alpha_{CF}(\hcal)\leq \min(\xcal,\frac{r^{r+2}}{r!}\log_e(\xcal)).$ 
\end{lemma}
\begin{IEEEproof}
If $r=1$, it is easy to see that the lemma holds. So we assume $r\geq 2$ throughout. 

Consider a conflict-free collection of $\hcal$ with $P$ colorings with $\alpha_{CF}(\hcal)$ total colors. Let $V=[m]$ be the set of vertices of $\hcal$. Consider the $\alpha_{CF}(\hcal)\times |V|$ matrix $G=\begin{bmatrix}G^1\\\vdots\\G^P\end{bmatrix},$ where $G^p$ represents the indicator matrix of the $p^{th}$ coloring in the collection. Let $\{g_1,\hdots,g_L\}$ denote the set of all distinct columns of $G$. Thus, we must have that $\alpha_{CF}(\hcal)\geq \log_2(L).$ Now, consider a coloring $C$ of the vertices of $\hcal$ with elements of $[L]$, where a vertex $i\in V$ gets label $\ell\in[L]$ if the $i^{th}$ column of $G$ is $g_{\ell}.$ By construction of $G$, $C$ is a conflict-free coloring of $\hcal$ and thus $L\geq \xcal.$ Thus we see that $\alpha_{CF}(\hcal)\geq \log_2(\xcal).$

The upper bound $\alpha_{CF}(\hcal)\leq \xcal$ follows as any conflict-free coloring of ${\hcal}$ also gives a conflict-free collection containing only the same coloring. We now show the other upper bound. Let $C:V\rightarrow [\xcal]$ be any coloring with $\xcal$ colors so that vertices in any hyperedge are colored with at most $r$ colors. Firstly, for $P=\frac{r^{r+1}\log_e\xcal}{r!}$, we show that there exists a $P\times \xcal$ matrix with entries from $[r]$ such that any $r$ columns of this matrix contain at least one row with $r$ distinct entries. Using this matrix, we construct a conflict-free collection of colorings with $\frac{r^{r+2}}{r!}\log_e(\xcal)$ colors, which will complete the proof. 

We now show the existence of this desired matrix. Let each entry of a random $P\times \xcal$ matrix $M$ be i.i.d and drawn uniformly at random from $[r].$ The probability that any particular $r$-subset $R\subset \xcal$ of columns contains $r$ distinct colors in the row $i$ is given by $\frac{r!}{r^r}.$ Thus, the probability $q$ that at least one $r$-subset of columns of $M$ does not contain distinct entries in any row, is given by 
\begin{align*}
    q\leq \binom{\xcal}{r}\left(1-\frac{r!}{r^r}\right)^P\stackrel{(a)}{<} \xcal^re^{-P\frac{r!}{r^r}}\leq 1,
\end{align*}
where $(a)$ follows as $\binom{\xcal}{r}<\xcal^r$ for $r\geq 2$, and since $1+x\leq e^x, \forall x$. This means that there is at least one matrix $M$ (say $M_{cf}$) of size $P\times \xcal$ with entries from $[r],$ such that each $r$-subset of columns contains distinct entries in some row. 

Now, using $M_{cf}$, we obtain a collection of $P$ colorings of $\hcal$ in the following way. With respect to the $p^{th}$ row of $M_{cf}$, we define a coloring $C^p:V\rightarrow [r]$ such that for each $i\in V,$ $C^p(i)$ is equal to the entry in the $j^{th}$ column of $M_{cf},$ where $j$ is the color assigned to the vertex $i$ in $C$. By the property of $M_{cf}$ and the coloring $C$ chosen, it can be verified that this collection $\{C^p:p\in[P]\}$ will be a conflict-free collection of colorings of $\xcal$. The collection uses $r$ colors per coloring and thus totally there are $rp$ colors being used. 
\end{IEEEproof}
Fig. \ref{fig2} gives an example hypergraph for which $\alpha_{CF}({\hcal})<\xcal_{CF}({\hcal})$. 
\begin{figure*} 
    \centering
  \subfloat[\label{2a}]{%
       \includegraphics[width=0.18\linewidth]{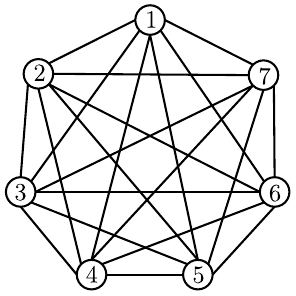}}
       \hfill
  \subfloat[\label{2b}]{%
        \includegraphics[width=0.18\linewidth]{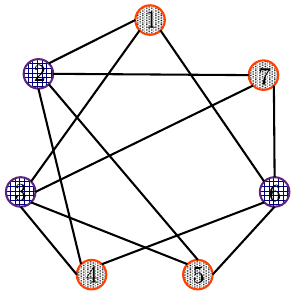}}
       \hfill
  \subfloat[\label{2c}]{%
        \includegraphics[width=0.18\linewidth]{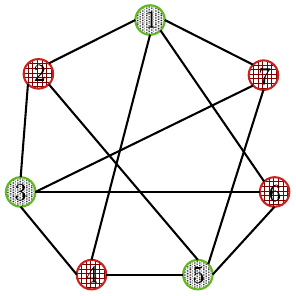}}
    \hfill
  \subfloat[\label{2d}]{%
        \includegraphics[width=0.18\linewidth]{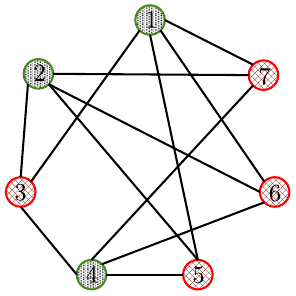}}
  \caption{\small \textcolor{black}{Figure (a) shows the hypergraph $K_7$ which is the complete graph on $7$ vertices, $\{1, 2, 3, 4, 5, 6, 7\}$ containing all the 2-sized subsets as hyperedges.  It requires $7$ colors for a conflict-free coloring, thus $\xcal_{CF}(K_7)=7.$ The three figures (b), (c) and (d) depict a collection of $1$-fold conflict-free colorings, each figure corresponding to one coloring using $2$ colors. Note that only those edges satisfied by the coloring are represented in (b), (c), and (d). In (b), the two color classes are $\{1,4,5,7\}$ and $\{2,3,6\}$. In (c), they are $\{2,4,6,7\}$ and $\{1,3,5\}$ and in (d) they are $\{1,2,4\}$ and $\{3,5,6,7\}$. It can be checked that each edge of $K_7$ is conflict-free in at least one of these colorings. Thus $\alpha_{CF}({\hcal})\leq 6<\xcal_{CF}({\hcal}).$}}
  \vspace{-0.4cm}
  \label{fig2}
\end{figure*}
The following lemma gives a class of hypergraphs for which the separation between parameters $\alpha_{CF}({\hcal})$ and $\xcal_{CF}({\hcal})$ can be quite large.  Fig. \ref{fig2} gives an example hypergraph from this class and illustrates the lemma. We observe, however, that the minimum of the two upper bounds as in Lemma \ref{lemma:generalhypergraph_alphabounds} is still (asymptotically) tight for this class of hypergraphs.
\begin{lemma}
\label{lemma_2uniformparameters1}
There exist a hypergraph $\hcal$ with $n$ hyperedges for which $\alpha_{CF}({\hcal})=\Theta(\log n)$ while $\xcal_{CF}({\hcal})=\Theta(\sqrt{n})$.
\end{lemma}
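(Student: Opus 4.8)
The plan is to exhibit a single hypergraph $\mathcal{H}$ that simultaneously has a small conflict-free covering number and a large conflict-free chromatic number, and the natural candidate is a $2$-uniform hypergraph, i.e., an ordinary graph. For such a graph, a color class being conflict-free for an edge $\{u,v\}$ just means $C(u)\neq C(v)$; so a conflict-free coloring is exactly a proper vertex coloring, and $\chi_{CF}(\mathcal{H}) = \chi(\mathcal{H})$, the ordinary chromatic number. Hence I would take $\mathcal{H} = K_q$, the complete graph on $q$ vertices, which has $n = \binom{q}{2} = \Theta(q^2)$ edges, so $q = \Theta(\sqrt{n})$, and $\chi_{CF}(K_q) = \chi(K_q) = q = \Theta(\sqrt{n})$. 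This settles the lower-complexity side's hard parameter.

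The core of the argument is then the upper bound $\alpha_{CF}(K_q) = O(\log q) = O(\log n)$. I would build a conflict-free collection $\mathfrak{C} = \{C^1,\ldots,C^P\}$ of $2$-colorings (so each $L_p = 2$) with $P = O(\log q)$, giving $\sum_p L_p = O(\log q)$. The standard trick: index the vertices of $K_q$ by distinct binary strings of length $\lceil \log_2 q\rceil$, and for each bit position $p \in \{1,\ldots,\lceil\log_2 q\rceil\}$ define $C^p : V \to [2]$ by letting $C^p(v)$ be the $p$-th bit of $v$'s label (shifted to $\{1,2\}$). For any edge $\{u,v\}$ with $u\neq v$, the labels of $u$ and $v$ differ in some bit position $p$, and then $C^p(u)\neq C^p(v)$, so $C^p$ is conflict-free for that edge; since it is a $2$-coloring of a $2$-uniform edge, ``conflict-free for the edge'' is automatic once the endpoints receive different colors (the endpoint with the unique color is whichever one we please). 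Thus $\mathfrak{C}$ is a valid conflict-free collection of $1$-fold colorings, giving $\alpha_{CF}(K_q) \le 2\lceil\log_2 q\rceil = O(\log n)$. For the matching lower bound $\alpha_{CF}(K_q) = \Omega(\log n)$: in any conflict-free collection $\{C^1,\ldots,C^P\}$ with $C^p$ using $L_p$ colors, each vertex $v$ is assigned the tuple $(C^1(v),\ldots,C^P(v))$ lying in a set of size $\prod_p L_p$; if two distinct vertices $u,v$ received the same tuple, then no $C^p$ could separate the endpoints of the edge $\{u,v\}$ (for $2$-uniform edges being conflict-free forces $C^p(u)\neq C^p(v)$), a contradiction. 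Hence $\prod_p L_p \ge q$, and by AM–GM $\sum_p L_p \ge P\,q^{1/P}$, which is minimized around $P = \ln q$ giving $\sum_p L_p = \Omega(\log q) = \Omega(\log n)$. Combining, $\alpha_{CF}(\mathcal{H}) = \Theta(\log n)$.

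I expect the main (though minor) obstacle to be nailing down the lower bound $\alpha_{CF}(K_q) = \Omega(\log n)$ cleanly: one must argue carefully that for a $2$-uniform edge ``conflict-free'' is equivalent to ``the two endpoints get distinct colors'' (so that a collection separating all edges is exactly a collection of colorings whose product map is injective on $V$), and then optimize $\sum_p L_p$ subject to $\prod_p L_p \ge q$ — the calculus/AM–GM step where one checks the minimum of $P\,q^{1/P}$ over integer $P$ is $\Theta(\log q)$. Everything else (the count $n = \Theta(q^2)$, the bit-labeling construction, $\chi(K_q) = q$) is routine. One caveat worth flagging in the write-up: the statement asks for a hypergraph with exactly $n$ hyperedges, so strictly one should say ``for infinitely many $n$'' or take $q$ with $\binom{q}{2} = n$; alternatively pad with repeated/extra edges inside $K_q$ without changing either parameter's order, since adding edges of a graph already properly colored by $\chi$ colors and already separated by the collection changes nothing.
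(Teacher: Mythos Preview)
Your proposal is correct and follows essentially the same approach as the paper: both take $\mathcal{H}$ to be the complete $2$-uniform hypergraph on $m$ vertices, observe that $\chi_{CF}=m=\Theta(\sqrt{n})$, and obtain $\alpha_{CF}=O(\log m)$ via the bit-labeling collection of $2$-colorings. The only minor difference is in the lower bound for $\alpha_{CF}$: the paper argues via the stacked binary indicator matrix (distinct columns of length $\alpha_{CF}$ force $\alpha_{CF}\geq\log_2 m$ directly), whereas you use tuple-injectivity followed by AM--GM on $\prod_p L_p\geq q$, but both rest on the same separating-family observation.
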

\begin{IEEEproof}
Consider the $2$-uniform hypergraph with $m$ vertices and all the $2$-sized subsets of $[m]$ as hyperedges. \rogers{Thus $n=\binom{m}{2}$.} It is easy to see that any conflict-free coloring of this graph requires $m=\Theta(\sqrt{n})$ colors.

\rogers{
Let us now turn our attention to $\alpha_{CF}({\hcal}).$ Consider a conflict-free collection of $P$ colorings $C^p:p\in[P]$, for some integer $P$, each with number of colors $L_p,$ such that $\sum_{p\in P}L_p = \alpha_{CF}({\hcal})$. For each $p \in [P]$, let $G^p$ be the indicator matrix associated with the coloring $C^p$. Consider the $(\alpha_{CF}({\hcal}) \times m)$ binary matrix $G=\begin{bmatrix} G^1\\\vdots\\G^P\end{bmatrix}$.  By the construction in Lemma \ref{lemma:upperboundCFcover}, this is a valid encoder for a PIC of ${\hcal}$. Since every $2$-sized subset of $[m]$ is a hyperedge in $\mathcal{H}$, no two columns of $G$ are thus identical. Thus, $\alpha_{CF}({\hcal}) \geq \log_2 m$. In order to prove an upper bound for $\alpha_{CF}({\hcal})$, let $P := \lceil \log_2 m \rceil$.  Given any assignment of distinct $P$-bit binary vectors to the elements of $[m]$, one can construct a conflict-free collection of $P$ colorings of ${\hcal}$ given as $C^p:[m]\rightarrow \{c_p^0,c_p^1\}$ for $p\in[P]$, where  $C^p(j) = c_p^0$ (or $= c_p^1$) if the $p$-th bit in the binary vector associated with $j$ is $0$ (respectively, $1$). Thus, $\alpha_{CF}({\hcal}) \leq 2\lceil \log_2 m \rceil$.  
}\end{IEEEproof}

\subsection{An upper bound on $\ell^*_1({\hcal})$ via $\alpha_{CF}({\hcal})$}
\label{subsec:logsqrgammaresult}
In the remainder of this section, we  prove a main result of this work, which is a new upper bound (Theorem \ref{thm_CF_cover} and Corollary \ref{corollary_achievability}) on $\alpha_{CF}({\hcal})$ (and thus on the optimal linear scalar PIC length) based on a readily computable parameter associated with the hypergraph ${\hcal}$. Towards that end, we first show that the optimal length of PIC for ${\hcal}$ is bounded by $\alpha_{CF}({\hcal})$, thus improving the bound in Lemma \ref{lemma:upperboundchromnumber}. 
\begin{lemma}
\label{lemma:upperboundCFcover}
$\ell_1^*({\hcal})\leq \alpha_{CF}(\mathcal{H}).$
\end{lemma}
\begin{IEEEproof}
Let ${\mathfrak C}=\{C^p:p\in[P]\}$ be a conflict-free collection of colorings of ${\hcal}(V=[m],\mathfrak{I}),$ where $C^p:V\rightarrow [L_p]$. We first show a PIC for ${\hcal}$ with length $\sum_{p\in[P]}L_p.$ The proof then follows by definition of  $\alpha_{CF}({\hcal})$. 

Let $G^p:p\in[P]$ denote the indicator matrices as defined in Definition \ref{DefngenmatrixAssocWithaColoring} associated with the colorings $C^p:p\in[P]$ respectively. By definition of the conflict-free collection, for each $I_r\in\mathfrak{I}$ we have by arguments similar to the proof of Lemma \ref{lemma:upperboundchromnumber}, that there is some $G^p$ which satisfies receiver $r$.  Then, by Lemma \ref{lemma:concatenatedGeneratorsPIC}, the matrix $G=\begin{bmatrix}
G^1\\\vdots\\G^P\end{bmatrix}$ is a valid encoder of a $k$-vector PIC to the given PICOD problem of length $\sum_{p\in[P]}L_p$. 
\end{IEEEproof}

The following observation is also needed to show our new upper bound. 
\begin{observation}
\label{obv_union}
Let $\mathcal{H} = \mathcal{H}_1 \cup \mathcal{H}_2$. Then, $\alpha_{CF}(\mathcal{H}) \leq \alpha_{CF}(\mathcal{H}_1) + \alpha_{CF}(\mathcal{H}_2)$. 
\end{observation}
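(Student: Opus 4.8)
The plan is to prove Observation~\ref{obv_union} directly from the definition of the $1$-fold conflict-free covering number by concatenating the two optimal collections. First I would fix a conflict-free collection $\mathfrak{C}_1 = \{C_1^1,\ldots,C_1^{P_1}\}$ of colorings of $\mathcal{H}_1$ that attains $\alpha_{CF}(\mathcal{H}_1)$, where $C_1^p : V \rightarrow [L_1^p]$ and $\sum_{p} L_1^p = \alpha_{CF}(\mathcal{H}_1)$; likewise fix an optimal collection $\mathfrak{C}_2 = \{C_2^1,\ldots,C_2^{P_2}\}$ of colorings of $\mathcal{H}_2$ with $\sum_q L_2^q = \alpha_{CF}(\mathcal{H}_2)$. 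Both families are collections of colorings of the common vertex set $V$, so their union $\mathfrak{C} := \mathfrak{C}_1 \cup \mathfrak{C}_2$ is again a collection of (scalar) colorings of the hypergraph $\mathcal{H} = (V,\mathcal{E})$ with $\mathcal{E} = \mathcal{E}_1 \cup \mathcal{E}_2$.

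The key step is to verify that $\mathfrak{C}$ is a conflict-free collection for $\mathcal{H}$. Take any edge $E \in \mathcal{E}$. Then either $E \in \mathcal{E}_1$ or $E \in \mathcal{E}_2$. In the first case, since $\mathfrak{C}_1$ is a conflict-free collection of $\mathcal{H}_1$, there is some $C_1^p \in \mathfrak{C}_1 \subseteq \mathfrak{C}$ that is a conflict-free coloring for the edge $E$; in the second case, symmetrically, some $C_2^q \in \mathfrak{C}_2 \subseteq \mathfrak{C}$ is conflict-free for $E$. Note that whether a fixed coloring is conflict-free for a given edge depends only on the edge (as a subset of $V$) and the coloring, not on which hypergraph the edge is viewed inside, so this transfer is immediate. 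Hence every edge of $\mathcal{H}$ is handled by some member of $\mathfrak{C}$, which is exactly the requirement in Definition~\ref{kfoldcfcoveringnumber}.

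Finally I would bound the cost of $\mathfrak{C}$: its total number of colors is $\sum_{p=1}^{P_1} L_1^p + \sum_{q=1}^{P_2} L_2^q = \alpha_{CF}(\mathcal{H}_1) + \alpha_{CF}(\mathcal{H}_2)$. Since $\alpha_{CF}(\mathcal{H})$ is the minimum such sum over all conflict-free collections of $\mathcal{H}$, we conclude $\alpha_{CF}(\mathcal{H}) \le \alpha_{CF}(\mathcal{H}_1) + \alpha_{CF}(\mathcal{H}_2)$.

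There is essentially no obstacle here; the only point worth a sentence of care is the observation that ``$C$ is conflict-free for $E$'' is a property of the pair $(C,E)$ alone, so a coloring that certifies $E$ in $\mathcal{H}_i$ still certifies $E$ in $\mathcal{H}$. One could also remark that the same argument shows $\alpha_{k,CF}$ is subadditive under union for every fixed $k$, though the statement as given only asserts the $k=1$ case.
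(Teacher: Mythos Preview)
Your proof is correct and is precisely the argument the paper has in mind: the paper merely says the observation ``follows by definition of $\alpha_{CF}({\cal H})$,'' and your union-of-optimal-collections argument is exactly that unpacked. Your side remark that the same proof gives subadditivity of $\alpha_{k,CF}$ under union is also valid.
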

We are now ready to prove the main result in this section, which is a new upper bound on the optimal length of scalar PIC schemes. 
\begin{theorem}
\label{thm_CF_cover}
Let $\mathcal{H} = (V, \mathcal{E})$ be a hypergraph where every hyperedge intersects with at most $\Gamma$ other hyperedges, for any $\Gamma > e$, the base of the natural logarithm. Then, $\alpha_{CF}(\mathcal{H}) = O(\log^2\Gamma)$. 
\end{theorem}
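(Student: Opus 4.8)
The plan is to reduce the general hypergraph to one with a guaranteed minimum edge size and then apply the Pach--Tardos bound (Theorem \ref{thm_pach_main}) via the union-decomposition observation (Observation \ref{obv_union}). First I would split $\mathcal{H}=(V,\mathcal{E})$ into two subhypergraphs on the same vertex set: $\mathcal{H}_{\mathrm{small}}$ containing all hyperedges of size at most $2t-2$ (for a threshold parameter $t$ to be chosen), and $\mathcal{H}_{\mathrm{big}}$ containing all hyperedges of size at least $2t-1$. By Observation \ref{obv_union}, $\alpha_{CF}(\mathcal{H})\le \alpha_{CF}(\mathcal{H}_{\mathrm{small}})+\alpha_{CF}(\mathcal{H}_{\mathrm{big}})$, so it suffices to bound each piece by $O(\log^2\Gamma)$.

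For $\mathcal{H}_{\mathrm{big}}$, every edge has size at least $2t-1$ and, since it is a subhypergraph of $\mathcal{H}$, every edge still intersects at most $\Gamma$ others; hence Theorem \ref{thm_pach_main} gives a single conflict-free coloring of $\mathcal{H}_{\mathrm{big}}$ with $O(t\Gamma^{1/t}\log\Gamma)$ colors, and a single coloring is a conflict-free collection, so $\alpha_{CF}(\mathcal{H}_{\mathrm{big}})=O(t\Gamma^{1/t}\log\Gamma)$. Choosing $t=\lceil\log\Gamma\rceil$ makes $\Gamma^{1/t}=O(1)$, yielding $\alpha_{CF}(\mathcal{H}_{\mathrm{big}})=O(\log^2\Gamma)$. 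For $\mathcal{H}_{\mathrm{small}}$, the edges are ``small'': each has size at most $2t-2=O(\log\Gamma)$. Here I would argue by a probabilistic/covering argument: pick a random coloring of $V$ with a constant number of colors (say $2$) and show that any fixed small edge $E$ is conflict-free with probability at least some constant $p>0$ depending only on $|E|\le 2t-2$ — indeed a small edge is conflict-free as soon as one color class meets it in exactly one vertex, which happens with constant probability. More carefully, since $|E|$ can be as large as $O(\log\Gamma)$, a single constant-color random coloring only gives success probability $\Omega(1/\mathrm{poly}(\Gamma))$ per edge; instead I would take $L_p$ proportional to $|E|$-type bounds, or better, observe that for an edge of size $s$ a random $2$-coloring is conflict-free with probability $s/2^{s-1}$, which is not constant. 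The cleaner route for the small case is: again invoke Theorem \ref{thm_pach_main} but now to build a collection. Partition $\mathcal{H}_{\mathrm{small}}$ further by edge size into $O(\log\Gamma)$ classes $\mathcal{H}^{(j)}$ where all edges have size exactly $j$ (for $j=1,\dots,2t-2$); each $\mathcal{H}^{(j)}$ is $j$-uniform with max intersection $\le\Gamma$. Applying Observation \ref{obv_union} over these classes and the Pach--Tardos theorem with parameter $t'=\lceil(j+1)/2\rceil$ (so $2t'-1\le j$) to each $\mathcal{H}^{(j)}$ gives $\alpha_{CF}(\mathcal{H}^{(j)})=O(t'\Gamma^{1/t'}\log\Gamma)$; summing over the $O(\log\Gamma)$ values of $j$ and bounding each term crudely by $O(\log^2\Gamma)$ would give $O(\log^3\Gamma)$, which is too weak, so this naive split must be improved.

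The improvement — and I expect this to be the main obstacle — is to handle the small-edge part without losing a $\log\Gamma$ factor from summing over edge sizes. The right idea is to cover all of $\mathcal{H}_{\mathrm{small}}$ at once using a single probabilistic construction of a conflict-free collection: sample $O(\log\Gamma)$ independent random $2$-colorings $C^1,\dots,C^P$ with $P=O(\log\Gamma)$, and show by a union bound over ``bad events'' that every edge of $\mathcal{H}_{\mathrm{small}}$ is conflict-free in at least one $C^p$. Here is where the intersection bound $\Gamma$ enters crucially: rather than union-bounding over all $n$ edges (which could be huge), one uses the Lovász Local Lemma or a direct argument exploiting that each edge conflicts with at most $\Gamma$ others, so a random constant coloring restricted to a small edge succeeds with probability bounded below by a function of $\log\Gamma$ only, and $O(\log\Gamma)$ rounds drive the failure probability for each edge below $1/(\Gamma+1)$, which suffices by the Local Lemma. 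Since each of the $P=O(\log\Gamma)$ colorings uses $O(1)$ colors, we get $\alpha_{CF}(\mathcal{H}_{\mathrm{small}})=O(\log\Gamma)$, and combining with the big case yields $\alpha_{CF}(\mathcal{H})=O(\log^2\Gamma)$. The delicate points are (i) getting the per-edge success probability of a constant-color random coloring to depend only on $\mathrm{poly}(\log\Gamma)$ despite edges of size $\Theta(\log\Gamma)$ — which may force using $\Theta(\log\Gamma)$ colors in each coloring in the collection and hence re-examining the final count — and (ii) making the Lovász Local Lemma application clean, which is also what underlies the polynomial-time algorithm claim in Remark \ref{polytimeremark}.
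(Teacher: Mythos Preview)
Your overall architecture matches the paper: split $\mathcal{H}$ into big edges (size $\geq 2t-1$ with $t=\lceil\log\Gamma\rceil$) and small edges, handle the big part by Theorem~\ref{thm_pach_main} directly to get $O(\log^2\Gamma)$, and handle the small part by a Lov\'asz Local Lemma argument over a collection of random colorings. The gap is that you never actually close the small-edge case. You correctly observe that a uniform $2$-coloring is conflict-free for an edge of size $s$ only with probability $s/2^{s-1}$, so for $s=\Theta(\log\Gamma)$ this is $\Gamma^{-\Theta(1)}$ and $P=O(\log\Gamma)$ rounds are nowhere near enough for the Local Lemma. You then flag the fix ``may force using $\Theta(\log\Gamma)$ colors in each coloring'' and stop there. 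That is exactly the missing step, and without it the argument is incomplete.

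The paper resolves the small-edge case differently: it buckets the small edges dyadically, letting $\mathcal{H}_i$ contain edges with $k_i/2\le |E|<k_i$ where $k_i=\kappa/2^i$ and $\kappa=2\log\Gamma-1$, and then for each bucket uses a \emph{biased} $2$-coloring with bias $q_i=1/k_i$. With that bias, a single round is conflict-free for $E\in\mathcal{H}_i$ with probability at least $|E|q_i(1-q_i)^{|E|-1}\ge \tfrac{1}{2k_i}$, so $t_i=\lceil 5k_i\log\Gamma\rceil$ rounds drive the bad-event probability below $\Gamma^{-2.5}$ and the Local Lemma applies (this is Claim~\ref{lem_Hi}). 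Since $\sum_i k_i\le 2\kappa$, the total number of colors over all buckets is $O(\kappa\log\Gamma)=O(\log^2\Gamma)$. Incidentally, your own hinted alternative also works: if you use $L=\Theta(\log\Gamma)$ uniform colors per round, then for any edge of size $s\le 2t-2$ a fixed vertex is uniquely colored with probability $(1-1/L)^{s-1}\ge e^{-O(1)}$, so a single round succeeds with constant probability; hence $P=O(\log\Gamma)$ rounds suffice for the Local Lemma and the total is $O(\log^2\Gamma)$. Either route closes the gap, but you must commit to one and carry out the computation.
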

\begin{IEEEproof}
Let $\ka := 2\log(\Gamma) - 1$. Let $\mathcal{G} = (V, \mathcal{E}_G)$ be a hypergraph defined on the vertex set $V$ with $\mathcal{E}_G = \{E \in \mathcal{E}~:~|E| \geq \ka\}$. From Theorem \ref{thm_pach_main} and Lemma \ref{lemma:generalhypergraph_alphabounds} ($\alpha_{CF}(\mathcal{G}) \leq \xcal_{CF}(\mathcal G)$), we know that $\alpha_{CF}(\mathcal{G}) = O(\log^2\Gamma)$. Let $P := \lceil \log \ka \rceil$. For $0 \leq i \leq P$, let $\mathcal{H}_i = (V, \mathcal{E}_i)$, where $\mathcal{E}_i = \{E \in \mathcal{E}~:~\frac{k_i}{2} \leq |E| < k_i\}$ and $k_i = \frac{\ka}{2^i}$. 
Clearly, $\mathcal{H} = \mathcal{G} \cup \mathcal{H}_0 \cup \mathcal{H}_1 \cup \cdots \cup \mathcal{H}_P$. We shall use the following claim whose proof uses the Lov{\'a}sz Local Lemma \cite{lovaszlocallemma} and is relegated to Appendix \ref{appendix_proof_claim}.

\begin{claim}
\label{lem_Hi} $\alpha_{CF}({\hcal}_i) \leq 2(\lceil 5k_i\log \Gamma \rceil)$. 
\end{claim} 
Using Claim \ref{lem_Hi}, we have 
\begin{align}
\nonumber
\sum\limits_{i=0}^P \alpha_{CF}(\mathcal{H}_i) 
&\leq 2\sum\limits_{i=0}^P \lceil 5k_i\log \Gamma \rceil 
\leq 10\log \Gamma\sum\limits_{i=0}^P k_i + 2P + 2 \\
\nonumber
&\leq 10\log \Gamma\sum\limits_{i \geq 0} \frac{\ka}{2^i}+ 2P + 2
\\
\label{eqn1001}&\leq  20\ka\log \Gamma + 2P + 2=O(\log^2\Gamma).
\end{align}
Now using Observation \ref{obv_union}, we have that $\alpha_{CF}(\mathcal{H}) \leq \alpha_{CF}(\mathcal{G}) + \sum\limits_{i=0}^P \alpha_{CF}(\mathcal{H}_i)$. Using (\ref{eqn1001}) now, the proof is complete.
\end{IEEEproof}
Using Theorem \ref{thm_CF_cover} in conjunction with Lemma \ref{lemma:upperboundCFcover}, we have the following achievability result for the PICOD problem. The achievability of lengths $m,n$ are trivial consequences of the problem setup. 
\begin{corollary}
\label{corollary_achievability}
For any $(n,m,{\mathfrak I})$-PICOD problem, let $\Gamma=\max_{r\in[n]}|\{r'\in[n]\backslash r:I_r\cap I_{r'}\neq \emptyset\}|.$ Then there exists a binary linear scalar PIC for the given problem with length $O(\min\{\log^2{\Gamma},m\}).$ Thus $\ell_1^*({\hcal})=O(\min\{\log^2{\Gamma},m\}).$
\end{corollary}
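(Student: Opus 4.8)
The plan is to assemble the corollary from results already in hand, together with two elementary observations; there is essentially no new difficulty.

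First I would dispose of the trivial bounds $\ell_1^*(\mathcal{H}) \le m$ and $\ell_1^*(\mathcal{H}) \le n$. The first is witnessed by the $m \times m$ identity matrix as encoder: every receiver recovers all $m$ messages and in particular one from its request-set; equivalently, this is the indicator matrix (Definition \ref{DefngenmatrixAssocWithaColoring}) of the coloring that gives every vertex a distinct color, which is trivially conflict-free, so this also follows from Theorem \ref{thm:upperboundchromnumber}. The second is witnessed by transmitting, for each $r\in[n]$, one chosen uncoded symbol $x_{d_r}$ with $d_r\in I_r$; this is a binary linear scalar PIC of length $n$. Hence $\ell_1^*(\mathcal{H}) \le \min\{m,n\}$ unconditionally, and the code can be taken over $\mathbb{F}_2$.

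Next, for the main term, assume $\Gamma > e$. By the definition of $\Gamma$, in the PICOD hypergraph $\mathcal{H}(V=[m],\mathfrak{I})$ every hyperedge $I_r$ intersects at most $\Gamma$ other hyperedges, so Theorem \ref{thm_CF_cover} applies and gives $\alpha_{CF}(\mathcal{H}) = O(\log^2\Gamma)$. Theorem \ref{thm:upperboundCFcover} with $k=1$ then gives $\ell_1^*(\mathcal{H}) \le \alpha_{CF}(\mathcal{H})$, and I would note that the encoder produced in that proof is the vertical concatenation of indicator matrices, whose entries are $0$ and $1$, and that Lemma \ref{lemma:concatenatedGeneratorsPIC} preserves the underlying field; hence the achievable scheme is a binary linear scalar PIC. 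Combining with the previous paragraph yields $\ell_1^*(\mathcal{H}) = O(\min\{\log^2\Gamma,\, m,\, n\})$ in this regime.

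Finally I would treat the degenerate case $\Gamma \le e$, i.e.\ $\Gamma$ a constant in $\{0,1,2\}$: here $\log^2\Gamma$ is bounded (and the right-hand side is naturally read as $O(\min\{m,n\})$ when $\log^2\Gamma$ is not meaningful), so the bound $\ell_1^*(\mathcal{H}) \le \min\{m,n\}$ from the first step already certifies the claim. The only point requiring a line of care is confirming that the resulting code is binary, which is immediate since indicator matrices are $0/1$ matrices; there is otherwise no real obstacle, as all the substance sits in Theorems \ref{thm_CF_cover} and \ref{thm:upperboundCFcover}.
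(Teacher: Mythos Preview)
Your proposal is correct and follows exactly the route the paper takes: invoke Theorem~\ref{thm_CF_cover} together with Theorem~\ref{thm:upperboundCFcover} for the $\log^2\Gamma$ term, and note that the bounds $m$ and $n$ are trivial. In fact you supply more detail than the paper, which simply states that the $m,n$ bounds are ``trivial consequences of the problem setup'' and does not separately discuss the binary field or the small-$\Gamma$ regime.
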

\subsubsection*{Comparison with known achievability results}
The original work of Brahma and Fragouli \cite{brahma_fragouli_PICOD} showed the existence of an achievable scheme with length $O(\min\{\log m(1+\log^+(\frac{n}{
\log m})), m, n\})$ (where $\log^+(x)=\max\{0,\log(x)\}$). For $m=n^\delta$ for some $\delta>0,$ this means the existence of a PIC with length $O(\log^2 n)$ is guaranteed.  Our result, Theorem \ref{thm_CF_cover}, gives an upper bound based on the parameter $\Gamma$ of the hypergraph. Given the set of vertices $V$ and edges $\cal E$ of a hypergraph, the parameter $\Gamma$ can be determined in $O(|V||{\cal E}|^2)$ time by a simple algorithm which runs through each edge computing its intersection with all other edges. Further the parameter $\Gamma \leq |{\cal E}|-1=n-1$ always, but it could be much smaller in general, as suggested by the below example. 
\begin{example}
Consider the hypergraph $\mathcal{H} = (V, \mathcal{E})$, where $V = [m]$, $\mathcal{E} = \{\{i,i+1,i+2\}:i\in [m-2]\}$ for $m\geq 3.$ Since every hyperedge overlaps with at most $3$ other edges, we have $\Gamma=3.$  The result from \cite{brahma_fragouli_PICOD} suggests the existence of a code of length $O(\log^2m)$, where by Theorem \ref{thm_CF_cover}, we have a code of constant length (as $m$ grows).     
\end{example}


In \cite{PolyTime_PICOD} an achievable scheme was presented for a PICOD problem with $n$ receivers with length $O(\log^2n).$ The algorithm in \cite{PolyTime_PICOD} had running time polynomial in the problem parameters $m,n$. \rogers{Our result also yields a polynomial time algorithm. 
However, the algorithm does not follow immediately from the proof.
The main difficulty in getting a deterministic algorithm is the presence of the Local Lemma 
in the proof.  
Derandomization of the Local Lemma to provide an constructive algorithm has been studied \cite{lllcgh, harris2019deterministic}. Applying Theorem 1.1 (1) in  \cite{harris2019deterministic}, we get a conflict-free coloring of a hypergraph using $O(t\Gamma^{\frac{1+\epsilon}{t}}\log \Gamma)$ colors, where $t$ and $\Gamma$ are as defined in Theorem \ref{thm_pach_main} and $\epsilon> 0 $ is a constant. This suffices to get a deterministic polynomial time coloring algorithm for the hypergraph $\mathcal{G}$ in the proof of Theorem \ref{thm_CF_cover} using $O(\log^2 \Gamma)$ colors. In a similar way, one can get polynomial time algorithms for constructing conflict-free collection of colorings for hypergraphs $\mathcal{H}_i$ in the proof such that the total number of colors used across all the colorings in such a collection is $O(k_i\log \Gamma)$.}

\begin{remark}
    Similar to Remark \ref{rem:randalg1ambda} and Remark \ref{rem:end}, we can use Theorem
    \ref{thm:moser} to obtain easy-to-implement randomized polynomial time algorithm for constructing scalar
    PICODs of length $O(\log^2 \Gamma)$.
\end{remark}

\section{`Local' Conflict-Free Chromatic Number and Pliable Index Coding}
\label{sec:local}
In this section, we define the \textit{local} conflict-free chromatic number. This results in a refined upper bound for $\ell^*_1(\hcal)$ which can be smaller than $\alpha_{CF}(\hcal)$. 

\begin{definition}[Local Conflict-Free Chromatic Number] 
\label{def:localcfchrom}
Given a hypergraph ${\hcal}(V,{\cal E})$, the local conflict-free chromatic number of 
${\hcal}$ is given by
$$ \Delta({\hcal}) = \min_{\substack{C : C \textrm{ is a CF}\\ \textrm{coloring of }{\hcal}}} \underbrace{\max_{E \in {\cal E}} \; \lvert \{C(v) : v \in E\}\rvert}_{\Delta_C(\hcal)}.$$

For convenience, we define $\Delta_C(\hcal) = \max_{E \in {\cal E}} \; \lvert \{C(v) : v \in E \}\rvert$, where 
 $C$ is a
 conflict-free coloring of ${\hcal}$. 
Therefore, 
$\Delta({\hcal}) = \min_C \Delta_C(\hcal)$, where the minimum is over all such  colorings $C$ of ${\hcal}$.
\end{definition}

We have the following observation.

\begin{observation}
\label{obv:DeltaXcomparison-new}
$\Delta({\hcal})\leq  \xcal_{CF}({\hcal}).$
\end{observation}

The following lemma shows that the gaps between $\Delta({\hcal}),$ $\alpha_{CF}({\hcal}),$ and $\xcal_{CF}({\hcal}),$ can be quite large. 
\begin{lemma}
\label{lemma_2uniformparameters2}
There exists a hypergraph $\hcal$ with $n$ hyperedges for which $\Delta({\hcal})=2,$ while $\alpha_{CF}({\hcal})=\Theta(\log n)$ and $\xcal_{CF}({\hcal})=\Theta(\sqrt{n})$.
\end{lemma}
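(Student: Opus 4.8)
The plan is to reuse the same $2$-uniform hypergraph $\mathcal{H}$ that appeared in the proof of Lemma \ref{lemma_2uniformparameters1}: take $V=[m]$ and let $\mathcal{E}=\binom{[m]}{2}$ consist of all $2$-sized subsets of $[m]$, so that $n=\binom{m}{2}$. For that hypergraph we already know from Lemma \ref{lemma_2uniformparameters1} that $\mathcal{X}_{CF}(\mathcal{H})=\Theta(\sqrt{n})$ and $\alpha_{CF}(\mathcal{H})=\Theta(\log n)$, so the only new content is the claim that $\Delta_1(\mathcal{H})=2$. Thus the whole proof reduces to exhibiting a single $1$-fold conflict-free coloring $C$ of $\mathcal{H}$ together with an essential color set $D_C$ for which $\Delta_{C,D_C}(\mathcal{H})=2$, and then arguing no smaller value is possible.

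For the construction I would take the coloring $C:[m]\to[m]$ that assigns to vertex $i$ the color $i$ (i.e., all vertices distinctly colored), and let $D_C=[m]$ be the full color set. Every edge $\{i,j\}$ is conflict-free since both vertices have unique colors, so $C$ is conflict-free and $D_C$ is trivially essential ($\mathcal{E}(D_C)=\mathcal{E}$). Now for any edge $E=\{i,j\}$ we have $|\cup_{v\in E}C(v)\cap D_C| = |\{i,j\}| = 2$, hence $\Delta_{C,D_C}(\mathcal{H})=2$ and therefore $\Delta_1(\mathcal{H})\le 2$. For the matching lower bound, observe that $\mathcal{H}$ has at least one edge, and for any conflict-free coloring $C'$ and any essential color set $D_{C'}$, an edge $E=\{i,j\}\in\mathcal{E}(D_{C'})=\mathcal{E}$ must contain a vertex $v$ with $C'(v)\cap C'(v')=\emptyset$ for the other vertex $v'$; in particular $C'(i)\ne C'(j)$, so $|C'(i)\cup C'(j)|\ge 2$ when both colors lie in $D_{C'}$ — and by essentiality and the conflict-free condition at least the singleton essential color $C'(v)$ together with $C'(v')$ must be accounted for, forcing $|\cup_{v\in E}C'(v)\cap D_{C'}|\ge 2$. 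Since $\Delta_1(\mathcal{H})\ge 1$ is immediate and the value $1$ would require an edge whose two vertices share the same single essential color (impossible for a conflict-free coloring on a $2$-element edge), we conclude $\Delta_1(\mathcal{H})=2$.

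Combining this with the already-established estimates $\alpha_{CF}(\mathcal{H})=\Theta(\log n)$ and $\mathcal{X}_{CF}(\mathcal{H})=\Theta(\sqrt{n})$ from the proof of Lemma \ref{lemma_2uniformparameters1} completes the argument. I do not expect any serious obstacle here: the main subtlety is simply to be careful about the definitions — checking that the identity coloring really is conflict-free, that the full color set is essential, and that the quantity $\Delta_{C,D_C}$ computed on $2$-sized edges is exactly $2$ — and then noting that $\Delta_1$ cannot drop to $1$ on a hypergraph with a nonempty edge set. The reuse of the earlier hypergraph is what makes the other two parameters fall out for free.
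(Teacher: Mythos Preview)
Your approach is the same as the paper's: use the complete $2$-uniform hypergraph on $[m]$, invoke Lemma~\ref{lemma_2uniformparameters1} for $\alpha_{CF}$ and $\mathcal{X}_{CF}$, and then argue $\Delta_1(\mathcal{H})=2$. The upper bound $\Delta_1(\mathcal{H})\le 2$ is fine (and in fact immediate from $2$-uniformity, without even needing the identity coloring).

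Your lower bound argument, however, is not right as written. You claim that $\Delta_1=1$ ``would require an edge whose two vertices share the same single essential color,'' but that is not what $\Delta_{C,D_C}=1$ means: it only says that every edge meets $D_C$ in at most one color, which is perfectly compatible with the two endpoints receiving distinct colors (one essential, one not). The earlier sentence asserting $|\cup_{v\in E}C'(v)\cap D_{C'}|\ge 2$ for a generic edge is similarly unjustified, since the non-uniquely-colored endpoint need not have its color in $D_{C'}$. What you actually need (and what the paper does) is a counting step: since every pair is an edge, any conflict-free coloring is a proper coloring with $m$ distinct colors; if $D_C$ omitted the colors of two vertices $u,v$, the edge $\{u,v\}$ would lie outside $\mathcal{E}(D_C)$, so every essential set has $|D_C|\ge m-1$. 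Then for $m\ge 3$ pick any two vertices whose colors are in $D_C$; the edge between them has two essential colors, forcing $\Delta_{C,D_C}\ge 2$. This closes the gap with a one-line fix.
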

\begin{IEEEproof}
Consider the $2$-uniform hypergraph with $m$ vertices and all the $2$-sized subsets of $[m]$ as hyperedges. We have already shown the values of the parameters $\alpha_{CF}({\hcal})$ and $\xcal_{CF}({\hcal})$ in Lemma \ref{lemma_2uniformparameters1}. 
Since every hyperedge is of size 2, $\Delta({\hcal}) \leq 2$.
To see that $\Delta({\hcal})=2,$ we first observe that in any conflict-free coloring (which uses $m$ colors), at least $m-1$ colors are present in any essential color set. Thus, for $m\geq 3$ there exists at least one hyperedge containing $2$ essential colors. Hence $\Delta({\hcal})=2.$  
\end{IEEEproof}

We now show that there is a scalar PIC of length $\Delta({\hcal})$ for the PICOD problem given by ${\hcal}([m],{\mathfrak I})$, provided we are operating over a sufficiently large finite field. For $K,N$ being positive integers such that $K\leq N$, let a linear code of dimension $K$ and length $N$ be referred to as an $[N,K]$ code. Recall that in the $K\times N$ generator matrix of a maximum distance separable (MDS) $[N,K]$ code, any $K$ columns are linearly independent. Below we define the MDS matrix associated with a given coloring $C$ of the graph ${\hcal}([m],{\mathfrak I})$.  
\begin{definition}[{MDS matrix associated with a CF coloring of ${\hcal}$}]
\label{defn:MDSmatrixcoloring}
 Let $C$ be a conflict-free coloring of ${\hcal}([m],{\mathfrak I})$ that uses colors from $[D]$.
 Let $\Delta_C=\max_{E \in {\cal E}} \; \lvert \{C(v) : v \in E\}\rvert$.
 Let $G'$ denote the $\Delta_C\times D$ generator matrix of a $[D,\Delta_C]$ MDS code.  We index the columns of $G'$ by the set $[D]$, and denote the column indexed by $d\in [D]$ as $G'_d$. Consider the $\Delta_C \times m$ matrix $G$ defined as follows (the columns of $G$ are indexed as $G_{i}:i\in[m]$): For $i\in[m]$, we set $G_{i}=G'_{C(i)}.$

We refer to $G$ as the \textit{MDS matrix associated with the coloring $C$}.
\end{definition}

Note that such a matrix defined above always exists when the field size is not smaller than $D$. Using the matrix defined above, we show in the below theorem the achievability of length $\Delta({\hcal})$.

\begin{theorem}
\label{theoremMDS}
$\ell^*_1({\hcal})\leq \Delta({\hcal}).$
\end{theorem}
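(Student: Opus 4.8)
The plan is to show that the MDS matrix $G$ associated with an appropriately chosen $k$-fold conflict-free coloring $C$ of ${\cal H}$ and an essential color set $D_C$ achieving the minimum in the definition of $\Delta_k({\cal H})$ is a valid encoder for a $k$-vector linear PIC of length $\Delta_k({\cal H})$. By Lemma \ref{lemmaindependence}, it suffices to verify property (P) for every receiver $r\in[n]$. So fix an edge $I_r\in{\mathfrak I}$. Since $D_C$ is essential, $I_r\in{\cal E}(D_C)$, so there is a vertex $d\in I_r$ with $C(d)\cap C(v')=\emptyset$ for all $v'\in I_r\backslash d$ \emph{and} $C(d)\subseteq D_C$. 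This $d$ will be our witness for property (P).

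First I would handle the dimension condition. Since $C(d)\subseteq D_C$, each column $G_{d,j}=G'(C_{d,j})$ is a genuine column of the MDS generator matrix $G'$, and since $|C(d)|=k$ these are $k$ distinct columns of $G'$; because $G'$ generates a $[D,\Delta]$ MDS code and $k\le \Delta$ (as $C(d)\subseteq \cup_{i\in I_r}C(i)\cap D_C$, whose size is at most $\Delta_{C,D_C}({\cal H})=\Delta$), any $k$ of its columns are linearly independent, so $\dim(\mathrm{span}(\{G_{d,j}:j\in[k]\}))=k$. Next, the intersection condition. Consider the columns $\{G_{i,j}:i\in I_r\backslash d,\ j\in[k]\}$. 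Split these into those with $C_{i,j}\notin D_C$, which are the zero vector and contribute nothing to the span, and those with $C_{i,j}\in D_C$, which equal $G'(C_{i,j})$. Thus $\mathrm{span}(\{G_{i,j}:i\in I_r\backslash d,\ j\in[k]\})$ is spanned by $\{G'(c):c\in D'\}$ where $D'=\big(\cup_{i\in I_r\backslash d}C(i)\big)\cap D_C$. Now $C(d)$ and $D'$ are disjoint subsets of $D_C$ (disjoint because of the conflict-free property at $d$), and their union $C(d)\cup D'\subseteq \cup_{i\in I_r}C(i)\cap D_C$ has size at most $\Delta\le\Delta$... more precisely $|C(d)\cup D'|\le\Delta_{C,D_C}({\cal H})=\Delta$. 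Hence $\{G'(c):c\in C(d)\cup D'\}$ is a set of at most $\Delta$ distinct columns of the MDS matrix $G'$, which is therefore linearly independent. A linearly independent family remains independent after partitioning, so $\mathrm{span}(\{G'(c):c\in C(d)\})\cap \mathrm{span}(\{G'(c):c\in D'\})=\{\boldsymbol 0\}$, which gives exactly the intersection condition of property (P) for $r$ with witness $d$.

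Having verified property (P) for every receiver, Lemma \ref{lemmaindependence} shows $G$ is a valid encoder; its number of rows is $\Delta=\Delta_{C,D_C}({\cal H})$. Taking $C$ to be a $k$-fold conflict-free coloring and $D_C$ an essential color set that together attain $\Delta_k({\cal H})=\min_C\min_{D_C}\Delta_{C,D_C}({\cal H})$ (and noting, as remarked after Definition \ref{defn:MDSmatrixcoloring}, that the required MDS matrix exists once the field size is at least $D=|D_C|$), we obtain a $k$-vector linear PIC of length $\Delta_k({\cal H})$, so $\ell^*_k({\cal H})\le\Delta_k({\cal H})$. The one step that needs a little care — and the main obstacle — is pinning down the bound $|C(d)\cup D'|\le\Delta$: one must observe that $C(d)\cup D'$ is contained in $\big(\cup_{i\in I_r}C(i)\big)\cap D_C$ and that the latter has size at most $\Delta_{C,D_C}({\cal H})$ precisely because $I_r\in{\cal E}(D_C)$, so the MDS "any $\Delta$ columns independent" guarantee applies. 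Everything else is routine linear algebra about columns of an MDS generator matrix and the zero-padding in Definition \ref{defn:MDSmatrixcoloring}.
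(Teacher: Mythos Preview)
Your proposal is correct and follows essentially the same approach as the paper's proof: pick a conflict-free coloring $C$ and essential color set $D_C$ attaining $\Delta_k({\cal H})$, build the associated MDS matrix $G$, and verify property (P) for each receiver by using that the non-zero columns of $G$ arising from any edge $I_r$ are at most $\Delta$ distinct columns of an MDS generator matrix (hence linearly independent), with the witness vertex $d$ contributing $k$ of them disjoint from the rest. Your explicit partition into $C(d)$ and $D'=(\cup_{i\in I_r\setminus d}C(i))\cap D_C$ is a slightly cleaner way of stating the paper's ``Observation,'' but the argument is the same.
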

\begin{IEEEproof}
Refer Appendix \ref{appendix:prooftheoremMDS}. 
\end{IEEEproof}

\begin{example}
We give an example of the code construction involved in proof of Theorem \ref{theoremMDS}. Consider the $2$-uniform hypergraph ${\hcal}$ of $V=\{1,\hdots,10\}$ with all the $\binom{10}{2}$ hyperedges, and let $C$ be a conflict-free coloring of ${\hcal}$. 
Any such coloring requires at least $10$ colors, and there exists an edge with $2$ colors. Thus we have $\Delta({\hcal})=2.$ We define the encoder matrix as $G$ as the encoder matrix of a $[10,2]$ MDS code. It is easy to check that this satisfies all the receivers. 
\end{example}

\subsection{Local conflict-free covering number and PICOD}
\label{subsec:localcovering_PICOD}
In this subsection we define a \textit{local} version of the covering number arising due to conflict-free collections of ${\hcal}$ and relate it to an achievable scheme for the PICOD problem. \prasad{OCTOBER 2022: This quantity is likely to be the strictly the tightest upper bound for $\ell_1^*({\hcal})$ presented in this work, provided the question in red on the next page is answered.}

In the below definition, we will make use of the notion of ``Conflict-free collection of colorings'', defined in Definition \ref{defn:cfcoveringnumber}.

\begin{definition}[Local Conflict-Free Covering Number] 
\label{def:localcfcovering}
Given a hypergraph ${\hcal}(V,{\cal E})$, the local conflict-free covering number of 
${\hcal}$ is given by
$$ \lambda({\hcal}) = \min_{\substack{\mathfrak{C} : \mathfrak{C} \textrm{ is a CF}\\ \textrm{collection of }{\hcal}}} 
\overbrace{
\sum_{C^p\in \mathfrak{C}}
\underbrace{\left (
\max_{E \in {\cal E}} \; \lvert \{C^p(v) : v \in E\}\rvert
\right )}_{\Delta_{C^p}(\hcal)}}^{\lambda_{\mathfrak{C}}(\hcal)} 
.$$

For convenience, we define $\lambda_{\mathfrak{C}}(\hcal) =\sum_{C^p\in \mathfrak{C}}  \Delta_{C^p}(\hcal) 
$, where 
 $\mathfrak{C}$ is a
 conflict-free collection of colorings of ${\hcal}$. 
Therefore, 
$\lambda({\hcal}) = \min_{\mathfrak C} \lambda_{\mathfrak C}(\hcal)$, where the minimum is over all such  collections $\mathfrak C$ of ${\hcal}$.
\end{definition}

 We now show that there is a achievable PICOD scheme for the PICOD hypergraph ${\hcal}([m],{\mathfrak{I}})$ with length $\lambda({\hcal}).$
\begin{theorem}
\label{thm:localcoverbound}
$\ell_1^*({\hcal})\leq \lambda({\hcal}).$
\end{theorem}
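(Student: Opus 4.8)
The plan is to generalize the proof of Theorem \ref{theoremMDS} to the collection setting, in exactly the way Theorem \ref{thm:upperboundCFcover} generalized Theorem \ref{thm:upperboundchromnumber}. Fix a conflict-free collection of $k$-fold colorings $\mathfrak{C}=\{C^p:p\in[P]\}$ of ${\cal H}([m],{\mathfrak I})$ and an essential coloring set cover $\mathfrak{D}_{\mathfrak C}=\{D_{C^p}:p\in[P]\}$ that together achieve $\lambda_k({\cal H})$, i.e.\ $\sum_{p\in[P]}\Delta_{C^p,D_{C^p}}({\cal H})=\lambda_k({\cal H})$. The goal is to build a $k$-vector linear PIC of length $\sum_{p\in[P]}\Delta_{C^p,D_{C^p}}({\cal H})$; the theorem then follows by minimality.

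First I would, for each $p\in[P]$, invoke Definition \ref{defn:MDSmatrixcoloring} to form the MDS matrix $G^p$ of size $\Delta_{C^p,D_{C^p}}({\cal H})\times mk$ associated with the coloring $C^p$ and its essential color set $D_{C^p}$. This requires the field $\mathbb F$ to be large enough, namely $|\mathbb F|\geq \max_{p\in[P]}|D_{C^p}|$, which I will state as a hypothesis (the theorem is implicitly over a sufficiently large field, as in Theorem \ref{theoremMDS}). The key structural fact, which I would extract from the proof of Theorem \ref{theoremMDS} in Appendix \ref{appendix:prooftheoremMDS}, is: if an edge $E=I_r$ lies in ${\cal E}(D_{C^p})$ — i.e.\ there is a vertex $d\in E$ with $C^p(d)\subseteq D_{C^p}$ and $C^p(d)\cap C^p(v')=\emptyset$ for all $v'\in E\backslash d$ — then the matrix $G^p$ satisfies receiver $r$ in the sense of Definition \ref{DefnGeneratorForEdge}. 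The reasoning is that the columns $\{G^p_{i,j}:i\in E, C^p_{i,j}\in D_{C^p}\}$ all sit inside the column space of $G'$, which has at most $\Delta_{C^p,D_{C^p}}({\cal H})$ columns among the $D_{C^p}$-indexed ones appearing in $E$; the MDS property makes any $\Delta$ of these columns independent; and the conflict-freeness of $d$ together with $C^p(d)\subseteq D_{C^p}$ guarantees the $k$ columns for $d$ are among these, are independent (dimension $k$), and their span meets the span of the remaining columns of $E$ (all of which are either zero or lie in the $G'$-column space spanned by indices disjoint from $C^p(d)$) only in $\{\boldsymbol 0\}$.

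With that fact in hand, the argument closes quickly. Since $\mathfrak{D}_{\mathfrak C}$ is an essential coloring set cover, $\bigcup_{p\in[P]}{\cal E}(D_{C^p})={\cal E}={\mathfrak I}$, so every receiver $r\in[n]$ has some $p$ with $I_r\in{\cal E}(D_{C^p})$, and hence $G^p$ satisfies $r$. Now apply Lemma \ref{lemma:concatenatedGeneratorsPIC} with the collection $\{G^p:p\in[P]\}$: the stacked matrix $G=\begin{bmatrix}G^1\\\vdots\\G^P\end{bmatrix}$ is a valid encoder of a $k$-vector PIC, of length $\sum_{p\in[P]}\Delta_{C^p,D_{C^p}}({\cal H})=\lambda_k({\cal H})$. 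Thus $\ell_k^*({\cal H})\leq\lambda_k({\cal H})$.

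The main obstacle is not the collection/stacking step, which is routine given Lemma \ref{lemma:concatenatedGeneratorsPIC}, but making the per-coloring claim precise: one must carefully verify that the MDS construction restricted to the colors of a single edge in ${\cal E}(D_{C^p})$ still yields Property (P), handling the zero columns (vertices colored outside $D_{C^p}$) and the fact that the relevant submatrix of $G'$ has at most $\Delta_{C^p,D_{C^p}}({\cal H})$ columns so the MDS guarantee applies. Since this is exactly the content already established in the proof of Theorem \ref{theoremMDS} (just with ${\cal E}$ replaced by ${\cal E}(D_{C^p})$ throughout), I would present it as an immediate adaptation and relegate the verbatim verification to the appendix, or simply cite the proof of Theorem \ref{theoremMDS}.
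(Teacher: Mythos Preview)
Your proposal is correct and follows essentially the same approach as the paper: construct the MDS matrix $G^p$ associated with each coloring $C^p$ and its color set $D_{C^p}$, invoke the argument of Theorem~\ref{theoremMDS} to conclude that $G^p$ satisfies any receiver $r$ with $I_r\in{\cal E}(D_{C^p})$, use the essential coloring set cover property to ensure every receiver is covered by some $G^p$, and then stack via Lemma~\ref{lemma:concatenatedGeneratorsPIC}. The paper's proof is in fact terser than yours, simply citing ``arguments similar to the proof of Theorem~\ref{theoremMDS}'' for the per-coloring claim, exactly as you suggest doing.
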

\begin{IEEEproof}
Let $\mathfrak{C}=\{C^1,\hdots,C^P\}$ be a conflict-free collection of colorings of ${\hcal}([m],{\mathfrak{I}})$.
For convenience, we use $\Delta(p)$ to denote $\Delta_{C^p}({\hcal})$. We want to show a valid PIC encoder matrix for ${\hcal}$ of size $\sum_{p\in [P]}\Delta(p)$. Invoking the definition of $\lambda({\hcal}),$ the proof is complete. 

Let the MDS matrix of size $\Delta(p)\times m$ associated with the coloring $C^p$ be denoted by $G^p$. Consider the $\sum_{p\in[P]}\Delta(p)\times m$ matrix 
\begin{align*}
G=\begin{bmatrix}G^1\\\vdots\\G^P\end{bmatrix}.
\end{align*}
We now show that the matrix $G$ is a valid encoder of a PIC for ${\hcal}.$ By definition of $\mathfrak{C}$, for any edge $I_r\in\mathfrak{I},$ there is some  coloring $C^p$ in which there exists some $v\in I_r$ such that $v$ is colored by $C^p$
and $C^p(v)\cap C^p(i)=\emptyset, \forall i\in I_r\setminus \{v\}.$ By arguments similar to the proof of Theorem \ref{theoremMDS}, we have that $G^p$ satisfies receiver $r$. As $r$ is arbitrary, by Lemma \ref{lemma:concatenatedGeneratorsPIC}, $G$ satisfies all receivers and is a valid PIC. 
\end{IEEEproof}

Also, the following observation is clearly true by definition, and because any conflict-free coloring $C$ of ${\hcal}$ generates a conflict-free collection of ${\hcal}$ containing only $C$. 
\begin{observation}
\label{obv_relation_btwn_LCF_and_LCFCover}
$\lambda({\hcal})\leq \min(\alpha({\hcal}),\Delta({\hcal}))$.
\end{observation}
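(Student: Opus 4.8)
The plan is to exhibit, for the optimal $k$-fold conflict-free coloring appearing in the definition of $\Delta_k({\cal H})$, a conflict-free collection together with an essential coloring set cover whose associated $\lambda$-value is exactly $\Delta_k({\cal H})$; since $\lambda_k({\cal H})$ is defined by minimizing over all such choices, the inequality $\lambda_k({\cal H})\leq \Delta_k({\cal H})$ drops out immediately. The guiding idea, as hinted in the statement, is simply that a single coloring is a (degenerate) collection of colorings.

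Concretely, first I would fix a $k$-fold conflict-free coloring $C$ of ${\cal H}$ with $\Delta_C({\cal H})=\Delta_k({\cal H})$, and an essential color set $D_C$ (so ${\cal E}(D_C)={\cal E}$) with $\Delta_{C,D_C}({\cal H})=\Delta_C({\cal H})$. I would then take $P=1$ and the singleton collection $\mathfrak{C}=\{C\}$: since $C$ is by assumption a $k$-fold conflict-free coloring for every edge $E\in{\cal E}$, the defining condition of a conflict-free collection of colorings is trivially met. Next I would set $\mathfrak{D}_{\mathfrak{C}}=\{D_C\}$; because $D_C$ is essential, $\bigcup_{p\in[1]}{\cal E}(D_{C^p})={\cal E}(D_C)={\cal E}$, so $\mathfrak{D}_{\mathfrak{C}}$ is a valid essential coloring set cover of $\mathfrak{C}$. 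Unwinding the definitions then gives $\lambda_{\mathfrak{C},\mathfrak{D}_{\mathfrak{C}}}({\cal H})=\sum_{p\in[1]}\Delta_{C^p,D_{C^p}}({\cal H})=\Delta_{C,D_C}({\cal H})=\Delta_k({\cal H})$, and chaining the two minimizations in the definition of $\lambda_k({\cal H})$ yields $\lambda_k({\cal H})\leq \lambda_{\mathfrak{C}}({\cal H})\leq \lambda_{\mathfrak{C},\mathfrak{D}_{\mathfrak{C}}}({\cal H})=\Delta_k({\cal H})$.

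There is essentially no obstacle here; the only point that warrants a moment's care is checking that the $P=1$ instance of each notion involved (conflict-free collection, essential coloring set cover, and the functional $\lambda_{\mathfrak{C},\mathfrak{D}_{\mathfrak{C}}}$) collapses exactly to its single-coloring counterpart ($C$ conflict-free, $D_C$ essential, and $\Delta_{C,D_C}$), so that no slack is introduced in either direction. One may also note that the same reduction is what underlies Observation \ref{obv_relation_btwn_CF_and_CFCover}; the substantive content of Section \ref{subsec:localcovering_PICOD} is instead the reverse inequality, i.e.\ $\Delta_k({\cal H})\leq \lambda_k({\cal H})$ (Corollary \ref{Delta=lambdacorr}), which is where the real work lies.
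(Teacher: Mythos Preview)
Your proposal is correct and follows exactly the same approach as the paper: the paper justifies this observation in one line by noting that any conflict-free coloring $C$ of ${\cal H}$ generates a conflict-free collection of ${\cal H}$ containing only $C$, which is precisely your $P=1$ reduction. Your write-up simply makes explicit the choice of $\mathfrak{D}_{\mathfrak{C}}=\{D_C\}$ and the unwinding of the definitions that the paper leaves implicit.
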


\section{The $t$-requests case}
\label{sec:trequest}
In the previous sections, we considered the PICOD setting with each receiver demanding one of the $m$ messages. In the present section, we generalize our results to the scenario where each receiver $r$ has to be sent any $\min(|I_r|,t)$-sized subset of messages indexed by its request set $I_r$. We shall call PICOD schemes which satisfy the above $t$-requests scenario as $t$-\textit{request pliable index codes}, or $t$-request PICs. In the rest of the section, we shall denote the smallest length of any $t$-request PIC for the PICOD problem defined by hypergraph $\hcal$ as $\ell^{*(t)}(\hcal)$. It was shown in \cite{PolyTime_PICOD} that, for any PICOD problem with $n$ receivers, a $t$-request PIC with length $O(t\log n+\log^2 n)$ exists, and can be designed in polynomial time (in number of receivers $n$ and the messages $m$). 

The notion of strong conflict-free coloring of hypergraphs was introduced by Horev et al. in \cite{Elad_2010_StrongCFcoloring} as a conflict-free coloring in which any edge of the hypergraph `sees' more than one distinct color. Formally, a $t$-\textit{strong conflict-free coloring} of hypergraph ${\hcal}=(V,{\cal E})$ with $L$ labels (or colors) is an assignment $C:V\rightarrow [L]$ such that the following holds. 
\begin{itemize}
    \item For any edge $E\in {\cal E}$, there exist $\min(t,|E|)$ vertices in $E$ which get distinct labels, i.e., there exists $V_E\subseteq E$ such that (a) $|V_E|=\min(t,|E|)$, (b) $|\{C(v):v\in V_E\}|=|V_E|$, and (c) $\{C(v):v\in V_E\}\cap \{C(v'):v'\in E\setminus V_E\}=\emptyset$.
\end{itemize}
The minimum $L$ such that a $t$-strong conflict-free coloring exists for ${\hcal}$ is then called the 
\emph{$t$-strong conflict-free chromatic number} of ${\hcal}$, which we denote by ${\xcal}^{(t)}({\hcal})$. The notion of a \textit{$t$-strong conflict-free collection} of colorings and the $t$\textit{-strong conflict-free covering number} of $\hcal$ (denoted by $\alpha^{(t)}(\hcal)$) are defined as in Definition \ref{defn:cfcoveringnumber}, with the only difference being that we want the colorings $C^p:p\in[P]$ to be $t$-strong conflict-free colorings for the subgraphs $\hcal_p:p\in[P]$, respectively. 
Similarly, we can define the \textit{t-strong local conflict-free chromatic number} as follows. 
\begin{align*}
    \Delta^{(t)}(\hcal)\triangleq \min\{\Delta_C(\hcal):C~\text{is a}~t \text{-strong conflict-free coloring for}~\hcal\},
\end{align*}
where $\Delta_C(\hcal)$ is as in Definition \ref{def:localcfchrom}. Finally, the $t$\textit{-strong local conflict-free covering number} $\lambda^{(t)}({\hcal})$ can be defined as 
\begin{align*}
\lambda^{(t)}({\hcal})\triangleq\min\{ \lambda_{\mathfrak C}(\hcal):\mathfrak C \mbox{ is a } t\mbox{-strong conflict-free collection of }\hcal\},
\end{align*}
where $\lambda_{\mathfrak{C}}({\hcal})$ is as in Definition \ref{def:localcfcovering}. 
The following results capture the utility of $t$-strong conflict-free colorings to the $t$-request PICOD problems. 
\begin{lemma}
\label{lemma:tstrong_coloringlemma}
Let $C$ be $t$-strong conflict-free coloring for a PICOD hypergraph $\hcal$ that uses $L$ colors. Then there exists a $t$-request PIC of length $L$ for $\hcal$ over every field. Further, there is a $t$-request PIC for $\hcal$ of length $\Delta_{C}(\hcal)$ over any field $\Fb$ with $|\Fb|\geq L$
where $\Delta_{C}(\hcal)$ is as in Definition \ref{def:localcfchrom}. 
\end{lemma}
\begin{IEEEproof}
The proof for the first part follows that of Lemma \ref{lemma:upperboundchromnumber}, while the proof for the second part follows that of Theorem \ref{theoremMDS} via MDS matrices defined in Definition \ref{defn:MDSmatrixcoloring}.

\textcolor{red}{WHERE EXACTLY ARE WE REQUIRING THAT $\Fb$ with $|\Fb|\geq L$. It does not seem to be mentioned in the MDS section, i.e., Def \ref{defn:MDSmatrixcoloring} or in the discussion above it.}
\end{IEEEproof}
\begin{lemma}
\label{lemma:tstrong_coveringlemma}
Let ${\mathfrak C}=\{C^p:p\in[P]\}$ be a $t$-strong conflict-free collection for a PICOD hypergraph $\hcal$, that uses $L_p:p\in [P]$ colors for the colorings $C^p:p\in[P]$ respectively. Then there is a $t$-request PIC  of length $\sum_{p\in [P]}L_p$ for $\hcal$ over every field. 
Further, there is a $t$-request PIC of length $\lambda_{\mathfrak{C}}(\hcal)$ for $\hcal$ over every field $\Fb$ such that $|\Fb|\geq \sum_{p\in [P]}L_p$, where $\lambda_{\mathfrak{C}}(\hcal)$ is as defined in Definition \ref{def:localcfcovering}.
\end{lemma}
\begin{IEEEproof}
The first part follows by arguments similar to Lemma \ref{lemma:upperboundCFcover}, while the second part uses arguments as in Theorem \ref{thm:localcoverbound}.
\end{IEEEproof}
We then have the following theorem which summarizes the extensions of our results to the $t$-request PICOD scenario.
\begin{theorem}
\label{thm:t-request_coloring_picod_relation}
For the PICOD problem defined by $\hcal$, we have,
\begin{align*}
    \ell^{*(t)}(\hcal)\leq \lambda^{(t)}(\hcal)\leq \min(\Delta^{(t)}(\hcal),\alpha^{(t)}(\hcal))\leq \xcal^{(t)}(\hcal).
\end{align*}
\end{theorem}
\begin{proof}
The claim that $\lambda^{(t)}(\hcal)$ is an upper bound for $\ell^{*(t)}(\hcal)$ follows from Lemma \ref{lemma:tstrong_coveringlemma}.
The proofs for the claim $\lambda^{(t)}(\hcal)\leq\min(\Delta^{(t)}(\hcal),\alpha^{(t)}(\hcal))$ follows similar to Observation \ref{obv_relation_btwn_LCF_and_LCFCover}. The last inequality follow by definition of the quantities involved. 
\end{proof}

\subsection{Upper Bound for $\lambda^{(t)}(\hcal)$}
\label{sec:lambdaupperboundtstrong}

Before we prove bounds on the $t$-strong conflict-free chromatic numbers, 
we prove a partitioning lemma. 

\begin{lemma}
\label{lem_partitioning_hypergraph}
Let $\mathcal{H}=(V,\mathcal{E})$ be a hypergraph with $|V| = m$ and $|\mathcal{E}| = n$. It is given that every hyperedge in $\mathcal{H}$ intersects at most $\Gamma$ other hyperedges. Let $\ell$ be a positive integer such that $\ell \geq \log (6(\Gamma+1))$. Then, there exist $V_1, \ldots , V_r \subseteq V$ and $\mathcal{E}_1 \uplus \mathcal{E}_2 \uplus \cdots \uplus \mathcal{E}_r \uplus \mathcal{E}' = \mathcal{E}$ with $r < \log_2m$ such that 
\\(i) $\forall i \in [r],~(E \in \mathcal{E}_i) \implies (6\ell < |E \cap V_i| \leq 36\ell)$, and
\\(ii) $(E \in \mathcal{E}') \implies (|E| \leq 12\ell)$. 
\end{lemma}
\begin{proof}
Let $\mathcal{E}' = \{E \in \mathcal{E}~:~|E| \leq 12 \ell\}$. Let $r$ be the \rogers{largest} integer so that $12\ell < \frac{m}{2^{r-1}}$. Thus, $r < \log_2m$. For each $i \in [r]$, let $m_i = \frac{m}{2^i}$ and let $\mathcal{E}_i = \{E \in \mathcal{E}\setminus \mathcal{E}'~:~m_i < |E| \leq m_{i-1}\}$. Consider an $i \in [r]$. Below we explain how we construct $V_i$. Independently and uniformly at random select a vertex $v \in V$ into $V_i$ with probability $\frac{12\ell}{m_i}$. Let $X_E^i$ be a random variable that denotes $|E \cap V_i|$, for a  hyperedge $E \in \mathcal{E}_i$. Let $\mu_E^i := E[X_E^i]$. Then, $\mu_E^i = \frac{12|E|\ell}{m_i}$. Since $m_i < |E| \leq m_{i-1}$, we have $12 \ell < \mu_E^i \leq 24\ell$. 
Applying the Chernoff bound given in Theorem \ref{thm_Chernoff} with $\delta = 1/2$, we get $Pr[|X_E^i - \mu_E^i| \geq \frac{1}{2}\mu_E^i] \leq 2e^{-\frac{\mu_E^i}{12}} \rogers{<} 2e^{-\frac{12 \ell}{12}} \leq 2e^{-\log(6(\Gamma+1))} = \frac{1}{3(\Gamma+1)}$. 

Let $A_E^i$ denote the bad event that $|X_E^i - \mu_E^i| \geq \frac{\mu_E^i}{2}$. We have shown that $Pr[A_E^i] \leq  \frac{1}{3(\Gamma+1)}$. Since $e\cdot \frac{1}{3(\Gamma+1)}\cdot (\Gamma+1) \leq 1$, 
from Lemma \ref{lem:local} we get $Pr[\wedge_{E \in \mathcal{E}_i} (\lnot A_E^i)] > 0$. Hence, there exists a $V_i$ such that $\forall E \in \mathcal{E}_i$,  $|X_E^i - \mu_E^i| < \frac{1}{2}\mu_E^i$. Since $12\ell < \mu_E^i \leq 24 \ell$, this implies that there is a $V_i$ such that $\forall E \in \mathcal{E}_i$, $6\ell < |E \cap V_i| \leq 36\ell$. 
\end{proof}

Using Theorem \ref{thm:moser}, we obtain the below algorithm for constructing the partitions.

\begin{mybox}{Partition($V, \mathcal E_i, \ell$)}

\textbf{Input:} (i) A set $V$, (ii) $\mathcal E_i = \{ E \subseteq V \;:\; m_i \leq |E| \leq m_{i-1}\}$, where $m_i = m/2^i$ and every
set in $\mathcal E_i$ overlaps with at most $\Gamma$ other sets in $\mathcal E_i$, and (iii) a positive integer $\ell > \log (6 (\Gamma+1))$.

\textbf{Output:} A set $V_i \subseteq V$ such that every set $E \in \mathcal E_i$ satisfies the condition $6 \ell < |E \cap V_i| \leq 36 \ell$.

\textbf{Algorithm:}
\begin{itemize}
	\item Each vertex $v \in V$ is independently and uniformly chosen to be in $V_i$ with probability $\frac{12\ell}{m_i}$.
	\item While $\exists E \in \mathcal E_i$ that does not satisfy the output condition
	
	\begin{itemize}
		\item Choose an arbitrary $E \in \mathcal E_i$ that does not satisfy the output condition
		\item Resample each vertex $v \in E$. That is, for each $v \in E$, independently and uniformly decide to include 
		it in $V_i$ with probability  $\frac{12\ell}{m_i}$.
	\end{itemize}
	\item Output $V_i$.
\end{itemize} 

\end{mybox}

\begin{remark} \label{rem:part}
We can apply Theorem \ref{thm:moser} by considering the sampling of the vertices as the random variables, 
and the events 
$A_E^i$  as the bad events.
So we 
need at most $|\mathcal E_i|/\Gamma \leq n/\Gamma$ resamplings in expectation. Prior to each resampling, we would also need
to test if all $E \in \mathcal E_i$ satisfies the output condition for the current choice of $V_i$. This takes $O(mn)$ time. So overall 
the time required is at most $O(mn^2/\Gamma)$ in expectation.
\end{remark}

\begin{theorem}
\label{thm_local_kCF_cover}
Let $\mathcal{H} = (V,\mathcal{E})$ be a hypergraph with $|V| = m$ and $|\mathcal{E}| = n$. It is given that every hyperedge in $\mathcal{H}$ intersects at most $\Gamma$ other hyperedges. 
Then, for any positive integer $t$, 
$$\lambda^{(t)}(\mathcal{H}) = \max(O(\log \Gamma \log m), O(t \log m)). $$ 
\end{theorem}
\begin{proof}
Let $\ell = \max(\log (6(\Gamma+1)), t)$. Then, from Lemma \ref{lem_partitioning_hypergraph}, we have $V_1, \ldots , \rogers{V_{r}} \subseteq V$ and $\mathcal{E}_1 \uplus \cdots \uplus \mathcal{E}_r \uplus \mathcal{E}'= \mathcal{E}$ with $r < \log_2 m$ such that (i) $\forall i \in [r],~(E \in \mathcal{E}_i) \implies (6\ell < |E \cap V_i| \leq 36\ell)$, and
(ii) $(E \in \mathcal{E}') \implies (|E| \leq 12\ell)$. For each $i \in [r]$, we define hypergraphs $\mathcal{H}_i = (V, \mathcal{E}_i)$. We define a local $t$-strong conflict-free coloring $c_i$ for $\mathcal{H}_i$ using $|V_i| + 1$ colors in which all the vertices in $V_i$ get a distinct color from the first $|V_i|$ colors and all the vertices in $V\setminus V_i$ get the $(|V_i| + 1)^{th}$ color. In such a coloring, each hyperedge $E \in \mathcal{E}_i$ sees at least $6\ell + 1$ colors exactly once and at most $36\ell + 1$ colors in total. Finally, we define a local $t$-strong conflict-free coloring  $c'$ for $\mathcal{H}'$ using $|V|$ colors that gives a distinct color to every vertex in $V$. In this coloring, every hyperedge $E \in \mathcal{E}'$ sees $|E|$ (which is $\leq 12\ell$) colors, each color exactly once. This completes the proof of the theorem. 
\end{proof}

\begin{remark} 
\label{rem:randalg1ambda}
By Remark \ref{rem:part}, we have a randomized algorithm for the partitioning in Lemma \ref{lem_partitioning_hypergraph}
that runs in expected time is at most $O(mn^2/\Gamma)$. Thus we have an algorithmic version of the above construction
that runs in expected time at most $O(mn^2\log m/\Gamma)$.
\end{remark}

\subsection{Upper Bound for $\alpha^{(t)}(\hcal)$}
\label{sec:alphaupperboundtstrong}

\begin{theorem}
\label{thm_kCF_cover}
Let $\mathcal{H} = (V,\mathcal{E})$ be a hypergraph with $|V| = m$ and $|\mathcal{E}| = n$. It is given that every hyperedge in $\mathcal{H}$ intersects at most $\Gamma$ other hyperedges. Let $t_1, t$ be two positive integers with $t_1 = \max(\log(6(\Gamma+1)), t)$. It is given that $\forall E \in \mathcal{E}$, $|E| > 12t_1$. Then, 
$$\alpha^{(t)}(\mathcal H) = \max(O(\log \Gamma \log m), O(t\log m)).$$
\end{theorem}
\begin{proof}
Applying Lemma \ref{lem_partitioning_hypergraph} with $\ell = t_1$, we get $V_1, \ldots , V_r \subseteq V$ and $\mathcal{E} _1 \uplus \cdots \uplus \mathcal{E}_r = \mathcal{E}$ with $r < \log_2 m$ such that $\forall i \in [r],~(E \in \mathcal{E}_i) \implies (6t_1 < |E \cap V_i| \leq 36t_1)$. Note that $\mathcal{E}' = \emptyset$ as every hyperedge in $\mathcal{H}$ is of size greater than $12t_1$. 

Consider an $i \in [r]$. We describe a $t$-strong conflict-free coloring for the hyperedges in  $\mathcal{E}_i$. For each vertex in $V_i$, assign a color that is chosen independently and uniformly at random from a set of $19 e^2 t_1$ colors. All the vertices in $V \setminus V_i$ are assigned the same color, a color different from the $19e^2 t_1$ colors used to color the vertices in $V_i$. 
For each $E \in \mathcal{E}_i$, let $z_E^i \triangleq |E \cap V_i|$. We have $6t_1 < z_E^i \leq 36t_1$. Let $B_E^i$
be the bad event that $E\cap V_i$ is colored with $\leq \lceil \frac{z_E^i+ t}{2} \rceil \leq \frac{z_E^i + t + 1}{2}$ colors, where the last inequality holds since $z_E^i$ and $t$ are integers. 
Note that if $B_E^i$ does not occur, then $E \cap V_i$ has some $t$ colors that appear
exactly once. Now we estimate the probability of $B_E^i$.

\begin{eqnarray*}
    Pr[B_E^i] & \leq & \binom{19 e^2 t_1}{(z_E^i+t+1)/2} \left( \frac{(z_E^i + t + 1)/2}{19 e^2 t_1}\right)^{z_E^i} \\
    & \leq & \left(\frac{e \cdot 19e^2 t_1}{(z_E^i+t+1)/2} \right)^{(z_E^i+t+1)/2}
   \left( \frac{(z_E^i + t + 1)/2}{19 e^2 t_1}\right)^{z_E^i} \quad \quad (\mbox{since  }\binom{n}{k} \leq \left(\frac{en}{k}\right)^k) \\
    & = & e^{t+1} \left(\frac{ z_E^i + t + 1}{38 e t_1}\right)^{(z_E^i - t- 1)/2}
    \\
    & \leq & e^{t_1 + 1} \left(\frac{ 36 t_1 + t_1 + 1}{38 e t_1}\right)^{(6t_1 + 1 - t_1- 1)/2} \quad (\mbox{since  } t \leq t_1, \mbox{ and } 6t_1 < z_E^i \leq 36t_1)
    \\
    & \leq & e^{t_1 + 1} \left( \frac{1}{e} \right)^{5t_1/2}  \leq \frac{e}{e^{1.5 t_1}} \leq \frac{e}{e^{1.5 \log (6 (\Gamma + 1))}}\\
    & < &\frac{1}{4 (\Gamma+1)} \quad . 
\end{eqnarray*}

Since $e\cdot \frac{1}{4(\Gamma+1)}\cdot (\Gamma +1) \leq 1$, by Lemma \ref{lem:local}, we get $Pr[\wedge_{E \in \mathcal{E}_i}(\lnot B_E^i)] > 0$. That is, $\forall E \in \mathcal E_i$, there is a 
$t$-strong conflict free coloring of $E \cap V_i$ with $19 e^2 t_1$ colors.
This completes the proof of the theorem. 
\end{proof}

\begin{mybox}{$t$-Strong-CF-Covering($\mathcal H = (V,  \mathcal E), t$)}

\textbf{Input:} (i) A hypergraph $\mathcal H = (V, \mathcal E)$, with $|V| = m$ and $|\mathcal E| = n$. 
			Every hyperedge $E \in \mathcal E$ intersects with at most $\Gamma$ other hyperedges $E' \in \mathcal E$.
			(ii) A positive integer $t$. Set $t_1 = \max(\log (6 (\Gamma+1)), t)$. Every $E \in \mathcal E$ satisfies $|E| > 12t_1$.
			
\textbf{Output:} A conflict-free collection of colorings $\mathcal C = \{c_1, c_2, \ldots, c_r\}$ of $\mathcal H$, where each 
$c_i: V \rightarrow \{0\} \cup [19 e^2 t_1]$ and $r < \log_2 m$.

\textbf{Algorithm:}
\begin{itemize}
	\item For $i = 1$ to $\log_2 m - 1$
	\begin{itemize}
		\item $\mathcal E_i = \{ E \in \mathcal E \;:\; m_i < |E| \leq m_{i-1}\}$, where $m_i = m/2^i$.
		\item $V_i = \mbox{Partition}(V, \mathcal E_i, t_1)$.
		\item For every $v \in V \setminus V_i$, $c_i(v) = 0$.
		\item For each vertex $v \in V_i$, let $c_i(v)$ be a color chosen independently and uniformly at random from $[19 e^2 t_1]$.
		\item While $\exists E \in \mathcal E_i$ such that $E \cap V_i$ contains  $\leq  \lceil \frac{z_E^i+ t}{2} \rceil$ distinct colors
	\begin{itemize}
		\item Choose an arbitrary $E \in \mathcal E_i$ such that $E \cap V_i$ contains  $\leq  \lceil \frac{z_E^i+ t}{2} \rceil $ distinct colors.
		\item Recolor each vertex $v \in E$. That is, for each $v \in E$, independently and uniformly assign $c_i(v)$ from $[19 e^2 t_1]$.
	\end{itemize}
	\end{itemize}
	\item Output $\mathcal C = \{c_1, c_2, \ldots, c_r\}$.
\end{itemize} 

\end{mybox}


\begin{remark} \label{rem:end}
For each $i$, we have a randomized algorithm for the partitioning in Lemma \ref{lem_partitioning_hypergraph}
that runs in expected time  at most $O(mn^2/\Gamma)$ (see Remark \ref{rem:part}).

We can again apply Theorem \ref{thm:moser} by considering the coloring of the vertices as the random variables, and the events
$B_E^i$  as the bad events. So we 
need at most $|\mathcal E_i|/\Gamma \leq n/\Gamma$ recolorings in expectation. Prior to each recoloring, we would also need
to test if any of the $B_E^i$'s occur for the current coloring. This takes $O(mn)$ time. So overall 
the time required is at most $O(mn^2/\Gamma)$ in expectation.

Since $i$ ranges from 1 to $\log_2 m - 1$, the total running time is $O(mn^2\log m/\Gamma)$ in expectation.
\end{remark}

\subsection{A $t$-request instance that requires $\Omega(t \log m/\log t)$ length PICOD}
\label{sec:tstrongtight}

For a PICOD hypergraph $\hcal = (V, \mathcal E)$ with $|V| = m$ and $|E | = n$,
Theorem \ref{thm_local_kCF_cover}
combined with Theorem \ref{thm:t-request_coloring_picod_relation} gives us a $t$-request PIC of length $\max(O(\log \Gamma \log m), O(t \log m))$.
However, this PIC is based on an MDS code and is defined over a large field. 
Theorem \ref{thm_kCF_cover} combined with Theorem \ref{thm:t-request_coloring_picod_relation} gives us a $t$-request PIC of length
$\max(O(\log \Gamma \log m), O(t \log m))$. This is not based on an MDS code. Hence the field can be of a smaller size. 
But the code yielded by Theorem \ref{thm_kCF_cover} works only when all the hyperedges (request sets) are sufficiently large.
In this section, we demonstrate a $t$-request instance that requires $\Omega(t \log m)$ length PICOD.

Consider the following hypergraph $\hcal = (V, \mathcal E)$ where $V = [m]$ and $\mathcal E = \mathcal E_1 \uplus \mathcal E_2 \uplus \ldots \uplus \mathcal E_r$,
where $r = \lfloor (\log m/ \log (12t)) \rfloor - 1$. We have $\mathcal E_1 = \{\{1, 2, \ldots, m\}\}$, and for $2 \leq i \leq r$, we have,
$\mathcal E_i = \left \{ \{1, 2, \ldots, \frac{m}{(12t)^{i-1}}\}, \{\frac{m}{(12t)^{i-1}} + 1, \ldots, \frac{2m}{(12t)^{i-1}}\}, \ldots, \{(m- \frac{m}{(12t)^{i-1}} + 1, \ldots, m\} \right\}$.
Notice that $(12t)^{r+1} \leq m \leq (12t)^{r+2}$, by our choice of $r$. 

The total number of hyperedges is given by 
$$\mathcal E = 1 + (12t) + \ldots + (12t)^{r-1} = \frac{(12t)^{r-1} -1}{12t - 1} < \frac{m}{11t}. $$
Hence we also have the overlapping parameter $\Gamma \leq |\mathcal E| < m/(11t)$.

Let $t$ be an integer such that $t > \log m$. Since $\Gamma < m/(11t)$, we have the following: 
$$\log(6 \Gamma + 1) < \log (6m/(11t) + 1) < \log m < t.$$

That is $\max(\log (6 \Gamma + 1), t) = t$. Since the hyperedges in $\mathcal E_r$ are of size $m/(12t)^{r-1} > 12t$, we 
satisfy all the conditions of Theorem \ref{thm_kCF_cover}.

We use the following result from \cite{Sowjanya}. We can see that the hypergraph $\hcal$ constructed above satisfies the 
conditions in the theorem below. Using the below theorem, we get that $\ell^{*(t)}(\hcal) \geq tr = t \left (\lfloor (\log m/ \log (12t)) \rfloor - 1\right ) = 
\Omega(t \log m/\log t)$. The implies that the length of the PIC schemes given by Theorems \ref{thm_local_kCF_cover} and 
\ref{thm_kCF_cover} are asymptotically tight up to a multiplicative factor of $\log t$.

\begin{theorem}
    Consider a PICOD hypergraph $\hcal = (V, \mathcal E)$ corresponding to the $t$-requests case. Suppose there exists a collection
    of subsets of $\mathcal E$, given by $\{\mathcal E_i \subseteq \mathcal E: i \in [r]\}$, such that the following condition holds
    for each $i \leq r-1$: For each $E \in \mathcal E_i$, and for any subset $T \subseteq E$ with $|T| = t$, there exists an edge
    $E' \in \mathcal E_{i+1}$, such that $T \cap E' = \emptyset$.

    Then, $\ell^{*(t)}(\hcal) \geq t r$.
\end{theorem}

\section{Extension to $k$-vector PIC}
\label{sec:k-folddefinition}
In this section, we briefly show that the idea of using conflict-free colorings for constructing scalar PICs extend naturally to $k$-vector PICs as well. Towards this end, we define the notion of 
$k$-fold conflict-free coloring of a hypergraph, which generalizes the definition of a conflict-free coloring. To the best of our knowledge this generalized notion is not available in literature. 
\begin{definition}
\label{definitionkfold}
A $k$-fold coloring of a hypergraph ${\hcal}=(V,{\cal E})$ is an assignment of $k$-sized subsets of $[L]$ 
to the vertices $V$, given by $C:V\rightarrow \binom{[L]}{k}$. A $k$-fold coloring is \textit{conflict-free for edge} $E\in{\cal E}$ if there exists some $v\in E$ such that $C(v)\cap C(v')=\emptyset$, for each $v'\in E\backslash v.$ A coloring $C$ is a \textit{$k$-fold conflict-free coloring} for ${\hcal}$ if $C$ is a $k$-fold conflict-free coloring for each edge in ${\cal E}.$ We define the  \textit{$k$-fold conflict-free chromatic number} of ${\hcal}$ as the smallest $L$ such that a $k$-fold conflict-free coloring of ${\hcal}$ exists as defined above, and denote it by $\xcal_{k,CF}({\hcal})$. 
\end{definition}
\begin{figure*} 
    \centering
  \subfloat[\label{1a}]{%
       \includegraphics[width=0.25\linewidth]{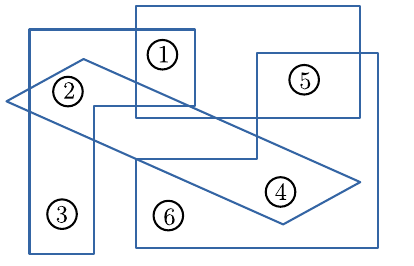}}
       \hfill
  \subfloat[\label{1b}]{%
        \includegraphics[width=0.25\linewidth]{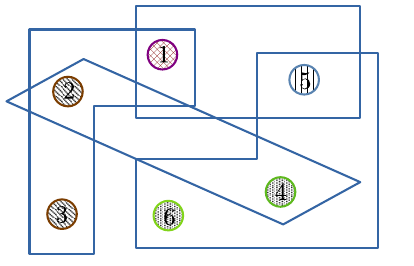}}
      \hfill
  \subfloat[\label{1c}]{%
        \includegraphics[width=0.25\linewidth]{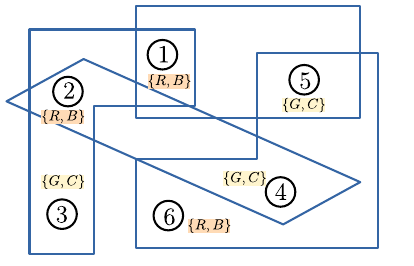}}
  \caption{\small \textcolor{black}{Figure (a) shows a hypergraph ${\hcal}$ with $6$ vertices and edge set ${\cal E}=\{\{1,2,3\},\{1,5\},\{2,4\},\{4,5,6\}\}$. Figure (b) represents a $1$-fold conflict-free coloring with $4$ colors, with the color classes $\{1\},\{2,3\},\{5\},\{4,6\}$. Figure (c) shows a $2$-fold conflict-free coloring using the colors $\{R,G,B,C\}.$}}
  \label{fig1}
  \vspace{-0.6cm}
\end{figure*}
Observe that $\xcal_{1,CF}({\hcal})=\xcal_{CF}({\hcal}).$ Fig. \ref{fig1} gives an example of $1$-fold and $2$-fold conflict-free coloring. Clearly, $\xcal_{k,CF}({\hcal})\leq k\xcal_{CF}({\hcal})$ as we can always obtain a $k$-fold conflict-free coloring from a $1$-fold conflict-free coloring by expanding each color into $k$ unique colors. However we show an example here for which this inequality is strict.
\begin{example}
\label{example_kfoldand1foldcomparison}
Consider the hypergraph $\hcal$ given by vertex set $V=\{a,\hdots,e\}$ and ${\cal E}=\left\{\{a,b\},\{b,c \},\right.$ $\left.\{c,d\},\{d,e\},\{e,a\}\right\}.$ 
Consider any $1$-fold coloring of this graph. 
It is easy to see that two colors are not sufficient to give a 1-fold conflict-free coloring.
It is also easy to find a  conflict-free coloring with $3$ colors, for instance, give color $1$ to vertices $\{a,c\}$, color $2$ to $\{b,d\}$ and color $3$ to vertex $e$. Thus $\xcal_{1,CF}(\hcal)=3.$	

Similarly, we can show that there cannot be a $2$-fold conflict-free coloring with $4$ colors. Now consider the following $2$-fold coloring with $5$ colors denoted by $\{1,\hdots,5\}$. Let set $\{1,2\}$ be assigned  to vertex $a$, $\{3,4\}$ to $b$, $\{5,1\}$ to $c$, $\{2,3\}$ to $d$ and $\{4,5\}$ to $e$. It is easy to check that this is a $2$-fold conflict-free coloring. Thus $\xcal_{2,CF}({\hcal})=5<6=2(\xcal_{1,CF}({\hcal})).$
\end{example}
We now define the indicator matrix of $k$-fold coloring of $\hcal$, which leads to a $k$-vector PIC achievability scheme for $\hcal$. 
\begin{definition}
\label{def:kfold-ind}
Let $C:V\rightarrow \binom{[L]}{k}$ denote a $k$-fold coloring of ${\hcal}(V,{\mathfrak{I}})$. Let $C(i)=\{C_{i,1},\hdots,C_{i,k}\}$ denote the subset assigned to the vertex $i\in V =[m]$. Consider a standard basis of the $L$-dimensional vector space over ${\mathbb F}$, denoted by $\{e_1,\hdots,e_L\}$. Now consider the $L\times mk$ matrix $G$ (with columns indexed as $\{G_{i,j}:i\in[m],j\in[k]\}$) constructed as follows. 
\begin{itemize}
    \item For each $i\in[m],j\in[k]$, column $G_{i,j}$ of $G$ is fixed to be $e_{C_{i,j}}$.
\end{itemize}
We call $G$ as the \textit{indicator matrix associated with the coloring $C$. }
\end{definition}
Using Definition \ref{def:kfold-ind}, the following achievability result naturally follows. 
\begin{lemma}
\label{lemma:upperboundfoldchromnumber}
$\ell^*_k({\hcal})\leq \xcal_{k,CF}({\hcal})$.
\end{lemma}
\begin{IEEEproof}
Let $G$ denote the indicator matrix associated with a $k$-fold conflict-free coloring $C$ as defined in Definition \ref{def:kfold-ind}. Let $C(i)=\{C_{i,1},\hdots,C_{i,k}\}$ be the set assigned to vertex $i$.

Since $C$ is conflict-free, any edge $I_r$ of ${\hcal}$ has a vertex $d$ such that $C(d)\cap C(i)=\emptyset, \forall i\in I_r\setminus d$.  Then, we have $\{e_{C_{d,j}}:j\in[k]\}\cap \{e_{C_{i,j}}:j\in[k]\}=\emptyset,$ for any $i\in I_r\backslash d.$ This also means $span(\{e_{C_{d,j}}: j\in[k]\})\cap span(\{e_{C_{i,j}}: i\in I_r\backslash d, j\in[k]\})=\{\boldsymbol{0}\}$, 
since the vectors involved are basis vectors. 
Further, as $|\{C_{d,j}:j\in[k]\}|=k,$ hence $\dim(span(\{e_{C_{d,j}}: j\in[k]\}))=k.$ Thus, $G$ satisfies receiver $r$ by Lemma \ref{lemmaindependence}. As $r$ is arbitrary, $G$ is a valid encoder for $\hcal$. By definition of $\xcal_{k,CF}({\hcal}),$ the proof is complete. 
\end{IEEEproof}
The conflict-free collection can similarly be generalized from Definition \ref{defn:cfcoveringnumber}. 
\begin{definition}[\textit{$k$-fold conflict-free collection, $k$-fold conflict-free covering number}]
\label{kfoldcfcoveringnumber}
Let $\mathcal{H}=(V,\mathcal{E})$ be a hypergraph. Let $\mathfrak{C} = \{C^1, \ldots , C^P\}$ where each $C^p:V\rightarrow \binom{[L_p]}{k}$ is a   $k$-fold colorings of the hypergraph $\hcal$. We say $\mathfrak{C}$ is a \emph{conflict-free collection of $k$-fold colorings of $\mathcal{H}$}, if 
there exists a collection of $P$ subgraphs $\hcal_p:p\in[P]$ such that  $\hcal=\cup_{p\in[P]}\hcal_p$ and the coloring $C^p$ (when restricted to vertices in $\hcal_p$) is a $k$-fold conflict-free coloring for $\hcal_p$.

The minimum value of the sum $\sum_{p=1}^PL_p$ over all possible collections $\mathfrak{C}$ (over all $P$) as defined above, is called the \emph{$k$-fold conflict-free covering number of $\mathcal{H}$} denoted by $\alpha_{k,CF}(\mathcal{H})$.

\end{definition}
Note that $\alpha_{1,CF}({\hcal})=\alpha_{CF}({\hcal})$. 
The following definition helps us to define the notion of $k$-fold conflict-free colorings. 
\begin{definition}[Local $k$-fold conflict-free chromatic number]
\label{defn:kfold_local}
Given a hypergraph ${\hcal}(V,{\cal E})$, the local $k$-fold conflict-free chromatic number of 
${\hcal}$ is given by
$$ \Delta_k({\hcal}) = \min_{\substack{C : C \textrm{ is a k-fold CF}\\ \textrm{coloring of }{\hcal}}} \underbrace{\max_{E \in {\cal E}} \; \lvert \bigcup_{v\in E} C(v)\rvert}_{\Delta_{k,C}(\hcal)}.$$

For convenience, we define $\Delta_{k,C}(\hcal) = \max_{E \in {\cal E}} \; \lvert \bigcup_{v\in E} C(v) \rvert$, where 
 $C$ is a $k$-fold
 conflict-free coloring of ${\hcal}$. 
Therefore, 
$\Delta_k({\hcal}) = \min_C \Delta_{k,C}(\hcal)$, where the minimum is over all such  
$k$-fold colorings $C$ of ${\hcal}$. 
\end{definition}
Similarly, we can define the local $k$-fold conflict-free covering number of $\hcal$, denoted by $\lambda_k({\hcal})$, extending Definition \ref{def:localcfcovering}.

Using similar arguments as in Section \ref{sec:conflictfreecolorings}, we get the following
bounds. 
\begin{lemma}
$\ell^*_k(\mathcal{H})\stackrel{(a)}{\leq} \lambda_k({\hcal})\stackrel{(b)}{\leq}\min(\Delta_k({\hcal}), \alpha_{k,CF}(\mathcal{H})) \stackrel{(c)}{\leq}\xcal_{k,CF}(\mathcal{H})$. Further, $\beta_k(\hcal)\leq k\beta_1(\hcal)$ where $\beta_k$ refers to any of the parameters $\lambda_k,\Delta_k,\alpha_{k,CF},\xcal_{k,CF}$. 
\end{lemma}
\begin{IEEEproof}
The proof for (a), (b), and (c) follow similar to the proofs of Theorem \ref{thm:localcoverbound}, Observations \ref{obv_relation_btwn_LCF_and_LCFCover} and  \ref{obv:DeltaXcomparison-new}, and Lemma \ref{lemma:generalhypergraph_alphabounds}.

The proof for the second part follows by the observation that expanding each coloring in a conflict-free coloring (or a collection) of $\hcal$ to $k$-unique colors results in a $k$-fold conflict-free coloring (or, respectively, a collection) of $\hcal$. 
\end{IEEEproof}
\section{Discussion}
\label{sec:discussion}
We have presented a  hypergraph coloring framework for the pliable index coding problem.
We provide easy-to-implement randomized algorithms for the PICOD problem and the $t$-request PICOD 
problem. These algorithms can be derandomized using existing techniques. However, such 
deterministic algorithms may be cumbersome to implement.
It would be interesting to give simpler deterministic polynomial-time algorithms for the same. 
Explicit algorithms for $k$-vector pliable index coding which give non-trivial improvements over simple extensions of scalar index codes would certainly be interesting. Finally more investigation is needed into the gaps between the  parameters presented in this work. 

\bibliographystyle{IEEEtran}
\bibliography{CF_PICOD.bib}
\appendices
\section{Tools from probability}

Below we state the Local Lemma, due to Erd\H{o}s and Lovasz, which is required in some of our proofs.

\begin{lemma}[\emph{The Local Lemma}, \cite{lovaszlocallemma}] \label{lem:local} Let $A_1, \ldots , A_n$ be events in an arbitrary probability space. Suppose that each event $A_i$ is mutually independent of a set of all the other events $A_j$ but at most $d$, and that $Pr[A_i] \leq p$ for all $i \in [n]$. If $ep(d+1) \leq 1,$ then $Pr[\cap _{i=1}^n \overline{A_i}] > 0$.  
\end{lemma}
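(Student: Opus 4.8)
The final statement is the symmetric Lov\'asz Local Lemma, and the plan is to derive it from the stronger \emph{general} (asymmetric) form, which is the natural object for an inductive proof. First I would fix, for each $i\in[n]$, a ``dependency set'' $N(i)\subseteq[n]\setminus\{i\}$ with $|N(i)|\le d$ such that $A_i$ is mutually independent of the family $\{A_j: j\notin N(i)\cup\{i\}\}$; the hypothesis guarantees such sets exist. I would then assign the uniform weight $x_i:=\tfrac{1}{d+1}$ to every event and verify the single scalar inequality
$$\Pr[A_i]\le p\le x_i\prod_{j\in N(i)}(1-x_j),$$
which reduces to checking $p\le \tfrac{1}{d+1}\bigl(1-\tfrac{1}{d+1}\bigr)^{d}$. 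Since $\bigl(1-\tfrac{1}{d+1}\bigr)^{d}>\tfrac{1}{e}$, this follows from the hypothesis $ep(d+1)\le 1$.

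The core of the proof is the inductive claim that for every $i\in[n]$ and every $S\subseteq[n]\setminus\{i\}$,
$$\Pr\Big[A_i\,\Big|\,\bigcap_{j\in S}\overline{A_j}\Big]\le x_i,$$
proved by induction on $|S|$. In the base case $S=\emptyset$ the claim is exactly the scalar inequality above. For the inductive step I would partition $S=S_1\cup S_2$, where $S_1=S\cap N(i)$ collects the dependent indices and $S_2=S\setminus N(i)$ the independent ones, and write the conditional probability as the ratio
$$\Pr\Big[A_i\,\Big|\,\bigcap_{j\in S}\overline{A_j}\Big]=\frac{\Pr\big[A_i\cap\bigcap_{j\in S_1}\overline{A_j}\,\big|\,\bigcap_{k\in S_2}\overline{A_k}\big]}{\Pr\big[\bigcap_{j\in S_1}\overline{A_j}\,\big|\,\bigcap_{k\in S_2}\overline{A_k}\big]}.$$
The numerator I would bound above by $\Pr[A_i\mid\bigcap_{k\in S_2}\overline{A_k}]=\Pr[A_i]$, using that $A_i$ is mutually independent of the events indexed by $S_2$; this gives $x_i\prod_{j\in N(i)}(1-x_j)$. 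The denominator I would bound below by $\prod_{j\in S_1}(1-x_j)\ge\prod_{j\in N(i)}(1-x_j)$, obtained by expanding $\bigcap_{j\in S_1}\overline{A_j}$ one event at a time and applying the induction hypothesis to each resulting conditioning set, which is strictly smaller than $S$. Dividing, the two products cancel and the ratio is at most $x_i$.

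With the claim in hand, the conclusion follows from the chain rule: writing
$$\Pr\Big[\bigcap_{i=1}^n\overline{A_i}\Big]=\prod_{i=1}^n\Big(1-\Pr\Big[A_i\,\Big|\,\bigcap_{j<i}\overline{A_j}\Big]\Big)\ge\prod_{i=1}^n(1-x_i)>0,$$
since each factor is at least $1-x_i=1-\tfrac{1}{d+1}>0$. I expect the main obstacle to be the inductive step, and in particular the lower bound on the denominator: one must set up the telescoping expansion of $\bigcap_{j\in S_1}\overline{A_j}$ so that the induction hypothesis applies to each factor (each conditioning set must be a proper subset of the original $S$), and must invoke the mutual-independence hypothesis cleanly to handle the numerator. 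The scalar verification and the final product are routine once the weights $x_i=\tfrac{1}{d+1}$ are chosen.
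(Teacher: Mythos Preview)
Your proof is the standard derivation of the symmetric Lov\'asz Local Lemma via the general (asymmetric) form, and it is correct: the choice $x_i=\tfrac{1}{d+1}$, the inequality $(1-\tfrac{1}{d+1})^d>\tfrac{1}{e}$, the inductive bound on the conditional probabilities with the numerator/denominator split, and the final chain-rule product are all the textbook steps, and you have identified the only delicate point (setting up the denominator telescoping so that each conditioning set is a strict subset of $S$).

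However, the paper does not prove this lemma at all. It is quoted verbatim as a classical result of Erd\H{o}s and Lov\'asz and is merely \emph{stated} in the appendix as a tool used in the proof of Claim~\ref{lem_Hi}; there is no argument given for it in the paper. So there is nothing to compare against: your write-up supplies a proof where the paper simply cites one.
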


Moser and Tardos \cite{mosertardos} demonstrated an algorithmic version of the Local Lemma. They showed the following\footnote{The paper \cite{mosertardos} states the result and algorithm in a general setting. To avoid clutter, we state a specific symmetric case which suffices our requirements.}.

\begin{theorem}[Algorithmic Local Lemma \cite{mosertardos}] 
\label{thm:moser}
Let $\mathcal P$ be a finite set of mutually independent random variables in a 
probability space. Let $A_1, \ldots, A_n$ be events that are determined by these variables. 
Suppose that each event $A_i$ is mutually independent of a set of all the other events $A_j$ but at most $d$, and that $Pr[A_i] \leq p$ for all $i \in [n]$. If $$ep(d+1) \leq 1, $$
then there exists an assignment of the random variables in $\mathcal P$ such that none of the events $A_i$ occur. Moreover, there is a randomized algorithm (described below) that finds such an assignment, that uses at most $n/d$ resampling steps in expectation.
\end{theorem}

\begin{mybox}{Algorithmic Local Lemma}
First, sample the random variables in $\mathcal P$ as per the distribution. If at least one of the events $A_i$ occur, choose an arbitrary $A_j$ that occurs. 
Then resample the random variables in $\mathcal P$ that determine $A_j$. Repeat this until none of the events $A_i$ occur.
\end{mybox}

A version of the Chernoff bound is stated below.
\begin{theorem}[Chernoff Bound, Corollary 4.6 in \cite{mitzenmacher}]
\label{thm_Chernoff}
Let $X_1, \ldots , X_n$ be independent Poisson trials such that $Pr[X_i] = p_i$. Let $X = \sum_{i=1}^n X_i$ and $\mu = E[X]$. For $0 < \delta < 1$, 
$$ Pr[|X-\mu| \geq \delta \mu] \leq 2e^{-\mu \delta^2/3}. $$ 
\end{theorem}

\section{Proof of Claim \ref{lem_Hi}}
\label{appendix_proof_claim}
We wish to show $\alpha_{CF}(\mathcal{H}_i) \leq 2(\lceil 5k_i\log \Gamma \rceil)$. 
Let $q_i = \frac{1}{k_i}$ and $t_i = \lceil 5k_i\log \Gamma \rceil$. We do $t_i$ rounds of coloring of the vertex set $V$ of $\mathcal{H}_i$, using two new colors in each round. 
In any given round, 
we color each vertex $v \in V$ independently with probability $q_i$ with first color, and give it the second color with the remaining
probability, i.e., $1- q_i$.

Consider a hyperedge $E \in \mathcal{E}_i$. Let $F_E$ denote the `bad' event that none of the $t_i$ colorings of $V$ is a conflict-free coloring for $E$. The probability that the coloring in a given round is a conflict-free coloring for $E$ is at least $|E|q_i(1-q_i)^{|E|-1}$. Thus, 
\begin{eqnarray*}
Pr[F_E] &\leq & \left(1 - |E|q_i(1-q_i)^{|E|-1}\right)^{t_i} \\
& \stackrel{(a)}{\leq} & \frac{1}{e^{t_i|E|q_i(1-q_i)^{|E|-1}}} \\
& \stackrel{(b)}{\leq}  & \frac{1}{e^{t_i\frac{k_i}{2}\frac{1}{k_i}(1-\frac{1}{k_i})^{k_i-1}}}\\  
& \stackrel{(c)}{\leq} & \frac{1}{e^{\frac{t_i}{2k_i}}}\\
& \leq & \frac{1}{\Gamma^{2.5}}~~(\mbox{since }t_i \geq 5k_i\log \Gamma ),
\end{eqnarray*}
where $(a)$ holds by inequality $1+x\leq e^x,$ $(b)$ holds as $ q_i = \frac{1}{k_i}$ and $~\frac{k_i}{2} \leq |E| < k_i$, $(c)$ holds using the inequality $(1+x)^r \geq 1 + rx~\mbox{for } x \geq -1, r \in \mathbb{R}\setminus (0,1).$ 
Thus, for each hyperedge $E$ in $\mathcal{H}_i$, the probability of the bad event $F_E$ is at most $p := \frac{1}{\Gamma^{2.5}}$. Observe that each such event $F_E$ is independent of all the other events, but at most $d := \Gamma$ events corresponding to those hyperedges intersecting with $E$. Since $ep(d+1) \leq 1$, by Local Lemma (Lemma \ref{lem:local}), $Pr[\cap_{E \in \mathcal{E}_i}\overline{F}_E] > 0$. This proves the lemma. 

\section{Proof of Theorem \ref{theoremMDS}}
\label{appendix:prooftheoremMDS}
Let $C$ be a conflict-free coloring of ${\hcal}$, that uses colors
from the set $[D]$.
We now show that the MDS matrix $G$ associated with the coloring $C$ (of size $\Delta_C(\hcal) \times m$ as defined in Definition \ref{defn:MDSmatrixcoloring}) satisfies the properties in Lemma \ref{lemmaindependence}, and hence is a valid PIC for ${\hcal}$. Note that this code would have length $\Delta_C(\hcal)$, which we denote by simply $\Delta_C$. By definition of $\Delta({\hcal}),$ our proof would then be complete. 

To see this, consider any $I_r\in{\mathfrak I}$. By definition of $G$ and $\Delta_C$, we have that \begin{align}
    \label{eqn1}
|\{G_{i}:i\in I_r\}|\leq \Delta_C.
\end{align}

Now, by the definition of conflict-free coloring $C$, at least one vertex $v\in I_r$ is such that $C(v)\neq C(v'), \forall v'\in I_r\backslash v.$ Thus, by the definition of matrix $G$, we have the following.

We first note that the vector $G_v$ appears exactly once in the collection $\{G_{i}:i\in I_r\}$. 
Since the columns of $G$ are taken from the generator matrix $G'$ of a $[D,\Delta_C]$ MDS code, any $\Delta_C$ distinct columns of $G$ are linearly independent. 
Further, $G_v$ is linearly independent from the space spanned by any collection of $(\Delta_C-1)$ other columns of $G'$. By (\ref{eqn1}) and the above observation, the columns in $\{G_i:i\in I_r\backslash v\}$ lie 
in the subspace spanned by the $(\Delta_C-1)$ columns of $G'$ apart from $\{G_{v}\}$. Thus, we have that  $span(\{G_{v}\})\cap span(\{G_{i}:i\in I_r\backslash v\})=\{\boldsymbol{0}\}$. Thus $G$ satisfies receiver $r$ by Definition \ref{DefnGeneratorForEdge}. As $r$ is arbitrary, by Lemma \ref{lemmaindependence}, $G$ represents a valid PIC for ${\hcal}$.
\end{document}